\documentclass[12pt]{amsart}
\usepackage{latexsym,fancyhdr,amssymb,color,amsmath,amsthm,graphicx,listings,comment}
\usepackage[absolute,overlay]{textpos}
\usepackage[section]{placeins}
\pagestyle{fancy}
\newtheorem{thm}{Theorem} 
\newtheorem{lemma}{Lemma} 
\newtheorem{coro}{Corollary}
\setlength{\parindent}{0cm}
\let\paragraph\subsubsection

\title{Block Jacobi matrices, 
       Barycentric limits and 
       Manifolds}

\fancyhead{}
\fancyhead[LO]{\fontsize{9}{9} \selectfont OLIVER KNILL}
\fancyhead[LE]{\fontsize{9}{9} \selectfont BLOCK JACOBI, BARYCENTRIC, MANIFOLDS}

\setlength{\parindent}{0cm} \setlength{\topmargin}{-1.0cm} \setlength{\headheight}{0.5cm} \setlength{\textheight}{23cm}
\setlength{\oddsidemargin}{0cm} \setlength{\evensidemargin}{0.0cm} \setlength{\textwidth}{17.0cm}

\author{Oliver Knill}
\date{January 14, 2026, Dedicated to Barry Simon on the occasion of his 80th birthday}
\address{Department of Mathematics \\ Harvard University \\ Cambridge, MA, 02138 }
\subjclass{}

\keywords{Lax deformation, Barycentric refinements, Discrete manifolds}

% 1/14/2025   three open problems
% 1/13/2025   final proof read
% 12/24/2025  pick up again 
% 12/1/2025   proof read
% 11/24/2024  move to new directory
% 11/23/2024  start in art-qr directory 

\begin{document}
\maketitle
%\begin{textblock}{10}(4.3,4.1) \begin{small}Dedicated to Barry Simon on the occasion of his 80th birthday \end{small} \end{textblock}

\begin{abstract}
We deform block triangular Jacobi matrices appearing 
in geometry, look at multi-scale Barycentric limits of geometries and
droplet boundary manifolds in Potts networks.
\end{abstract}

\section{Introduction}

\paragraph{}
A {\bf finite abstract simplicial complex} $G$ of dimension $q$ with exterior derivative $d$
defines a block triangular {\bf Dirac matrix} $D=d+d^*$, a block Jacobi matrix with $q+1$ blocks.
It is the square root of the block diagonal Hodge Laplacian $L=D^2=d d^* + d^* d$. 
For any continuous function $g$, a continuum symmetry appears in the form 
of isospectral {\bf Lax deformations} 
$\frac{d}{dt} D_t=[B_t,D_t]$ with $B_t=g(D)^+-g(D)^-$. This
is a non-linear ODE. We also get $D_t=Q_t^* D Q_t$ from 
the {\bf QR decomposition} $e^{-t g(D_0)}=Q_t R_t$. The ODE and QR pictures agree:

\begin{thm}
ODE and QR give the same isospectral $D_t=c_t+c_t^*+m_t$ and $c_t+c_t^*$ is Dirac.
\end{thm}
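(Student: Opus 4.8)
The plan is to treat the QR side first---where the block structure falls out essentially for free---and then match it to the ODE by uniqueness. Write $D_0=d_0+d_0^*$ with $d_0\colon\Omega^k\to\Omega^{k+1}$ the exterior derivative; with the underlying space graded by simplex dimension (so the blocks are consecutive index ranges), $d_0$ occupies the block subdiagonal, $d_0^*$ the block superdiagonal, $d_0^2=0$, and the Hodge Laplacian $L_0=D_0^2=d_0d_0^*+d_0^*d_0$ is block diagonal. Since $g(D_0)$ is symmetric, $e^{-tg(D_0)}$ is symmetric positive definite, hence invertible, so the QR factorization $e^{-tg(D_0)}=Q_tR_t$ (with $Q_t$ orthogonal, $R_t$ upper triangular with positive diagonal, $Q_0=R_0=I$) exists and is smooth in $t$. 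Put $D_t:=Q_t^*D_0Q_t$. Because $e^{-tg(D_0)}$ commutes with $D_0$ we also get $D_t=R_tD_0R_t^{-1}$, and isospectrality is immediate from the first expression.

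Next I would read the block shape off $D_t=R_tD_0R_t^{-1}$. An upper triangular $R_t$ is block upper triangular for the grading, so $R_td_0R_t^{-1}$ is block upper Hessenberg (nothing below the block diagonal except the block subdiagonal) while $R_td_0^*R_t^{-1}$ is strictly block upper triangular; hence $D_t$ is block upper Hessenberg, and being symmetric it is block tridiagonal. Write $D_t=c_t+c_t^*+m_t$ with $c_t$ the block subdiagonal part (so $c_t\colon\Omega^k\to\Omega^{k+1}$) and $m_t$ the block diagonal part. To see $c_t^2=0$, pass to the other representation: $D_t^2=R_tD_0^2R_t^{-1}=R_tL_0R_t^{-1}$ is a conjugate of the block diagonal $L_0$ by the block upper triangular $R_t$, so it is block upper triangular, and it is symmetric, so it is block diagonal; but for a block tridiagonal $D_t$ the $(k+2,k)$ block of $D_t^2$ equals $(c_t^2)_{k+2,k}$, which therefore vanishes. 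Thus $c_t^2=0$: $c_t$ is the differential of a cochain complex on $\bigoplus_k\Omega^k$, and $c_t+c_t^*$ is its Dirac matrix. (Comparing the remaining blocks of $D_t^2$ yields, for free, $c_tm_t+m_tc_t=0$ and identifies the block diagonal part of $D_t^2$ with the conjugates $R_t^{(k)}L_0^{(k)}(R_t^{(k)})^{-1}$ of the Hodge blocks; neither fact is needed here.)

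Finally I would check that this $D_t$ solves the Lax ODE. Differentiating $e^{-tg(D_0)}=Q_tR_t$ and multiplying by $Q_t^*$ on the left and $R_t^{-1}$ on the right gives $Q_t^*\dot Q_t+\dot R_tR_t^{-1}=-g(D_t)$, where $Q_t^*\dot Q_t$ is skew symmetric and $\dot R_tR_t^{-1}$ is upper triangular; splitting both sides into strictly-lower, diagonal and strictly-upper parts and using skewness forces $Q_t^*\dot Q_t=\pm B_t$, whence $\frac{d}{dt}D_t=[B_t,D_t]$ with $D(0)=D_0$. This is the Symes / Deift--Li--Tomei--Nanda--Watkins identity; the overall sign is pinned down by the sign conventions in the statement. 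As the Lax ODE has a unique solution through $D_0$, the ODE and QR deformations coincide, and the decomposition $D_t=c_t+c_t^*+m_t$ with $c_t+c_t^*$ Dirac carries over to the ODE picture.

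The step I expect to need the most care is this last one: aligning the exponent sign in the QR with the orientation of the triangular splitting that enters $B_t$, so that the commutator comes out with exactly the sign stated---routine bookkeeping, but easy to botch. The one genuinely structural observation---that symmetry forces $D_t^2=R_tL_0R_t^{-1}$ to be block diagonal, which is precisely what upgrades $c_t+c_t^*$ from a mere block-off-diagonal symmetric operator to an honest Dirac matrix---is short once the two QR representations of $D_t$ are in hand; it is, in effect, why the theorem records both pictures.
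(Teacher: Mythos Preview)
Your route and the paper's diverge. The paper does not differentiate the QR factorization; it instead \emph{postulates} companion ODEs $Q_t'=Q_t\bigl(g(D_t)^{-}-g(D_t)^{+}\bigr)$ and $R_t'=\bigl(2g(D_t)^{+}+g(D_t)^{0}\bigr)R_t$, checks that these keep $Q_t$ orthogonal and $R_t$ (block) upper triangular, and then shows that $H(Q,R)=e^{-tg(D_0)}-Q_tR_t$ is a conserved quantity of the coupled system, so the ODE solution and the QR data agree by uniqueness. Your direct approach---reading off the block tridiagonal shape of $D_t$ from $D_t=R_tD_0R_t^{-1}$ and then forcing $c_t^{2}=0$ from the observation that $D_t^{2}=R_tL_0R_t^{-1}$ is simultaneously block upper triangular and symmetric---is a clean argument for the ``$c_t+c_t^{*}$ is Dirac'' half of the statement, which the paper's proof does not spell out at all.

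The last step, however, hides more than a sign. From $Q_t^{*}\dot Q_t+\dot R_tR_t^{-1}=-g(D_t)$ with $\dot R_tR_t^{-1}$ \emph{entry-wise} upper triangular, the skew/upper split forces $Q_t^{*}\dot Q_t$ to equal the \emph{entry-wise} quantity $g(D_t)_{\mathrm{str.\,upper}}-g(D_t)_{\mathrm{str.\,lower}}$. This differs from the block quantity $B_t=g(D_t)^{+}-g(D_t)^{-}$ by the entry-wise off-diagonal skew part of the block-diagonal piece $g(D_t)^{0}$; once $m_t\neq 0$ (and it becomes nonzero immediately, since $D_0'=2(d_0d_0^{*}-d_0^{*}d_0)$ already has non-diagonal entries inside each block), this discrepancy is nonzero and does not commute with $D_t$. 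So your differentiation yields the entry-wise Toda flow, not the block one in the theorem. Closing this gap is exactly what the paper's conserved-quantity device is built to do: by writing down $R_t'$ with the full block term $g(D_t)^{0}$ on the right, it avoids ever splitting $g(D_t)^{0}$ against an entry-wise triangular constraint. You should either switch to that argument for the final identification, or else make precise in what sense ``QR'' is being taken so that the block $B_t$ really is what drops out.
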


\paragraph{}
The {\bf Barycentric refinement} on finite abstract simplicial complexes 
defines a sequence $G=G_0,G_1,G_2,\dots$ of topologically
equivalent complexes. The {\bf density of states} $dk(G_n)$ of the operators
$L(G_n)$ satisfies a {\bf central 
limit theorem} in every dimension $q={\bf dim}(G)$: there is a weak {\bf limiting law} for any $q \geq 1$.
For $q>1$, it appears to be singular continuous. 

\begin{thm}
For $q \geq 1$, the weak limit $\lim_{n \to \infty} dk(G_n)$ exists and is independent of $G$. 
\end{thm}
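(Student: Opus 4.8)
The plan is to exploit that Barycentric refinement is a \emph{local, self-similar} operation: I would show that the local combinatorics of $G_n$ around a typical simplex stabilizes in a way that forgets $G$, prove convergence of $dk$ first for a bounded-degree truncation by the method of moments, and then remove the truncation by a rank estimate. \emph{Step 1 (local structure).} The simplices of $G_{n+1}$ are precisely the flags of simplices of $G_n$, so the graph underlying $L(G_{n+1})$, together with its nonzero matrix entries, is obtained from that of $L(G_n)$ by one inflation rule depending only on $q$. Hence, for each fixed radius $r$, the decorated rooted $r$-ball of a simplex $x$ in the graph of $L(G_n)$ is a function of the combinatorial $r$-ball of $x$, and the empirical distribution of these $r$-balls over all simplices of $G_n$ is pushed forward, from step $n$ to step $n+1$, by a single stochastic substitution map. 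That map is primitive --- after finitely many refinements every admissible local pattern occurs near every deep simplex --- so Perron--Frobenius gives convergence of the empirical distribution, with exponential growth constant the top eigenvalue $(q+1)!$ of the $f$-vector refinement matrix; moreover the ``old'' simplices that still remember $G$ form a fraction $O((q+1)!^{-m})$ at combinatorial depth $m$, so the limiting local statistics do not depend on $G$. (One reduces at the outset to $G$ pure of dimension $q$, since lower-dimensional parts of $G$ refine at the strictly slower rate $\le q!$ and make up a vanishing proportion of the simplices.)

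\emph{Step 2 (truncated operator).} Fix a threshold $D$ and let $L^{(D)}(G_n)$ be the compression of $L(G_n)$ to the simplices of degree $\le D$ in its graph; this operator has degree $\le D$, so for a polynomial $P$ the average $\frac{1}{\#\{\deg\le D\}}\sum_{\deg(x)\le D}\langle\delta_x,P(L^{(D)}(G_n))\delta_x\rangle$ is the mean of a local observable, each diagonal power $\langle\delta_x,(L^{(D)})^k\delta_x\rangle$ being a weighted count of closed walks of length $k$ that sees only the $k$-ball of $x$. By Step 1 this mean converges, so all moments of $dk(L^{(D)}(G_n))$ converge; since the degrees are bounded, the limit moment sequence lives in a fixed interval and is determinate, whence $dk(L^{(D)}(G_n))\to\kappa_q^{(D)}$ weakly, with $\kappa_q^{(D)}$ independent of $G$.

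\emph{Step 3 (removing the truncation).} The difference $L(G_n)-L^{(D)}(G_n)$ has rank $O(\#\{x:\deg(x)>D\})$, so the standard rank bound for empirical spectral distributions bounds the Kolmogorov distance between $dk(L(G_n))$ and $dk(L^{(D)}(G_n))$ by $C\,\#\{\deg>D\}/f(G_n)$, which by Step 1 tends to $C\,\mathbb P[\deg_\infty>D]$; and since the incidence count of $G_n$ grows at the same rate $(q+1)!$ as $f(G_n)$, the mean degree stays bounded, giving $\mathbb P[\deg_\infty>D]\le C'/D\to 0$. Thus $dk(L(G_n))$ is Cauchy in the L\'evy metric, its limit $\kappa_q=\lim_{D\to\infty}\kappa_q^{(D)}$ exists, and $\kappa_q$ is independent of $G$ --- the theorem. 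For $q=1$ no truncation is needed: the $1$-skeleton of $G_n$ is the $n$-fold edge subdivision of $G$, all degrees are $\le\max(2,\Delta(G))$, Step 2 applies directly, and $\kappa_1$ is then compactly supported and absolutely continuous (of arcsine type), pinned down by a de Rham-type self-similarity equation.

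The main difficulty is Step 1: converting ``Barycentric refinement is local and self-similar'' into the precise statements that (i) the decorated $r$-ball frequencies converge --- which requires an explicit description of the pattern-substitution matrix and a proof of its primitivity and aperiodicity --- and (ii) the limit forgets $G$ --- which requires the quantitative depth-versus-growth estimate above. Steps 2 and 3 are comparatively soft, but they rest essentially on the single bound $\sup_n\mathbb E_x[\deg(x)]<\infty$; this is what carries the unbounded-degree regime $q\ge 2$ with no control of higher moments. Indeed, for $q\ge 2$ the second and higher moments of $dk(G_n)$ diverge as $n\to\infty$ --- rare high-degree simplices contribute too much to $\mathrm{tr}\, L(G_n)^k$ --- so a direct moment method cannot work and the truncation in Step 2 is not optional.
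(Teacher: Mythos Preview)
Your approach is sound and genuinely different from the paper's. The paper does not use local weak convergence or the method of moments at all; instead it works directly with the spectral function $F_G(x)=\lambda_{[nx]}$ and controls $\|F_{G_m}-F_{G_{m+1}}\|_1$ via the Lidskii--Last eigenvalue inequality $\sum_j|\alpha_j-\beta_j|\le\sum_{i,j}|A_{ij}-B_{ij}|$ (Lemma~2). The geometric input is that $G_{m+1}$ is covered by $(q+1)!$ isomorphic copies of $G_m$ whose pairwise overlaps lie in a $(q-1)$-dimensional set; hence the Kirchhoff matrices of $G_{m+1}$ and of the disjoint union of the copies differ in only $O(q!^m)$ entries while the matrix size is of order $((q+1)!)^m$, giving $\|F_{G_m}-F_{G_{m+1}}\|_1=O((q+1)^{-m})$ and thus a Cauchy sequence in $L^1[0,1]$. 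Independence of $G$ is then obtained by a Banach fixed-point argument in a rescaled graph metric, and monotonicity of $F$ upgrades $L^1$ convergence to uniform convergence on compact subintervals of $(0,1)$.

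Your route through Benjamini--Schramm-type local statistics, truncated moments, and a rank inequality is the probabilist's path; it has the conceptual advantage of identifying the limit as the spectral measure of a limiting random rooted complex and of isolating the single analytic input $\sup_n\mathbb E[\deg]<\infty$ that carries the unbounded-degree regime. The paper's argument is shorter and more elementary --- one linear-algebra lemma, no truncation, no Perron--Frobenius --- and it delivers an explicit exponential rate for $\|F_{G_n}-F\|_1$. Your Step~1 is where the real work hides, as you acknowledge; the paper effectively replaces the pattern-frequency analysis by the single counting estimate ``overlap has size $O(q!^m)$'', which is the same combinatorics packaged so that no substitution matrix or primitivity check is needed. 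Both arguments ultimately rest on the same growth dichotomy $(q+1)!$ versus $q!$ coming from the Stirling $f$-vector map.
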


\paragraph{}
A function $f: V(G) \to K_k=\{0, \dots, k\}$ defines a {\bf network gas} containing
$k$ different particles. If $f(v)=j$ then particle $j$ is at $v \in V$, if $f(v)=0$ no particle 
is at $v$.  One can see $f$ as a configuration in a {\bf Ising} or {Potts model} on the network.
Define the {\bf level set} $G_f=\{x \in G, f(x)=K_k\}$. 
A {\bf $m$-manifold} is a simplicial complex for which unit spheres
are $(m-1)$-spheres.
%where $f((x_0,\dots, x_j))=(f(x_0), \dots, f(x_j))$.

\begin{thm}
If $G$ is a $m$-manifold, then $G_f$ is either empty or a $(m-k)$-manifold. 
\end{thm}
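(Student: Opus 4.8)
The plan is to realise $G_f$ as the induced subcomplex of the Barycentric refinement $G_1$ on the vertex set $W=\{x\in G:f(x)=K_k\}$ — so the simplices of $G_f$ are the flags contained in $W$, as in the graph-theoretic Sard picture — and then to compute unit spheres directly. Recall that the vertices of $G_1$ are the simplices of $G$, that the simplices of $G_1$ are the flags (chains under inclusion) in $G$, and that for a $j$-dimensional simplex $x$ the unit sphere in $G_1$ is the \emph{genuine} simplicial join
\[
S_{G_1}(x)\;\cong\;\mathrm{sd}(\partial x)\,*\,\mathrm{sd}(\mathrm{lk}_G(x)),
\]
where $\partial x\cong S^{j-1}$ is the boundary of the abstract simplex $x$, $\mathrm{lk}_G(x)$ is the link of $x$ in $G$, and $\mathrm{sd}$ denotes one Barycentric refinement: a flag $C$ with $x\notin C$ and $C\cup\{x\}$ a flag splits uniquely as $C=C^-\sqcup C^+$ into proper faces and proper cofaces of $x$, and any such pair recombines to such a flag. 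Since $G$ is an $m$-manifold, the link $\mathrm{lk}_G(x)$ of the $j$-simplex $x$ is an $(m-j-1)$-sphere (a standard consequence of the unit-sphere condition).

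First I would pass to a local statement at a vertex $x\in W$ of $G_f$. Let $A_i=\{v\in x:f(v)=i\}$ be the colour classes, so each $A_i\neq\emptyset$, $a_i:=|A_i|$, and $\sum_i a_i=j+1\geq k+1$. The sphere $S_{G_f}(x)$ is the induced subcomplex of $S_{G_1}(x)$ on $W$. Every proper coface $z\supsetneq x$ in $G$ has $f(z)\supseteq f(x)=K_k$, hence $f(z)=K_k$ and $z\in W$, so the entire factor $\mathrm{sd}(\mathrm{lk}_G(x))$ survives; a proper face $y\subsetneq x$ lies in $W$ iff $y$ still meets every $A_i$. Because in $S_{G_1}(x)$ the face part and the coface part combine freely, this gives
\[
S_{G_f}(x)\;\cong\;\Delta(P_x)\,*\,\mathrm{sd}(\mathrm{lk}_G(x)),
\]
where $P_x$ is the poset of nonempty proper ``colour-full'' faces of $x$ (those meeting every $A_i$) and $\Delta(P_x)$ is its order complex. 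By the standard join-of-spheres property (the join of a $p$-sphere and a $q$-sphere is a $(p+q+1)$-sphere), it then suffices to show $\Delta(P_x)$ is a $(j-k-1)$-sphere: then $S_{G_f}(x)$ is a $\bigl((j-k-1)+(m-j-1)+1\bigr)=(m-k-1)$-sphere, and since $x$ was an arbitrary vertex of $G_f$, this exhibits $G_f$ as an $(m-k)$-manifold (and $G_f=\emptyset$ when $W=\emptyset$, which is forced as soon as $k>m$).

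The heart of the argument is the identification of $P_x$. The assignment $S\mapsto(S\cap A_0,\dots,S\cap A_k)$ is a poset isomorphism from $P_x$ onto the poset of nonempty proper faces of the product polytope $\Pi=\Delta(A_0)\times\cdots\times\Delta(A_k)$, using that the faces of a product of polytopes are precisely products of faces: $S$ is colour-full iff every $S\cap A_i$ is a nonempty face, and $S$ is a proper face of $x$ iff some $S\cap A_i$ is a proper face of $\Delta(A_i)$. Hence $\Delta(P_x)=\mathrm{sd}(\partial\Pi)$, the Barycentric refinement of the boundary complex of $\Pi$, which is shellable, hence a PL sphere, hence (as needed here) an Evako sphere, of dimension $\dim\Pi-1=\sum_i(a_i-1)-1=(j+1)-(k+1)-1=j-k-1$. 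I expect the main obstacle to be making precisely this step airtight inside the paper's inductive framework, i.e.\ the passage from ``$\partial\Pi$ is a classical PL sphere'' to ``$\Delta(P_x)$ is a sphere in the sense used here''. A self-contained alternative is to prove $\Delta(P_x)$ is a $(j-k-1)$-sphere by induction on $j+1$: the base case $a_0=\cdots=a_k=1$ gives $P_x=\emptyset=S^{-1}$, and the inductive step removes a vertex from a colour class of size $\geq 2$ and checks that the induced move on order complexes preserves being a sphere. A minor secondary point is the bookkeeping that the two displays above really are genuine simplicial joins and not merely homeomorphic to joins, which is exactly the flag description of the first paragraph.
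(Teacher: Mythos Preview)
Your proof is correct and shares the paper's architecture: both split $S_{G_f}(x)$ via the join $S^-(x)*S^+(x)$, observe that the coface factor $S^+(x)$ survives unchanged (since any $z\supsetneq x$ inherits $f(z)=K_k$), and reduce everything to identifying the colour-full part of $S^-(x)$ as a $(j-k-1)$-sphere.

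The difference lies entirely in that last step. The paper handles it by applying the theorem inductively to the $(j-1)$-sphere $\partial x$, obtaining that $(\partial x)_f$ is a $(j-1-k)$-manifold, and then asserts in a single clause (``since we are in a simplex, it has to be a $(n-1-k)$-sphere'') that this manifold is in fact a sphere --- a step that is true but not further argued in the text. You instead identify the poset $P_x$ of colour-full proper faces of $x$ directly with the face lattice of $\partial\bigl(\Delta(A_0)\times\cdots\times\Delta(A_k)\bigr)$, which makes the sphere property immediate from polytope theory. Your route is more explicit and actually supplies what the paper glosses over; the price, which you correctly flag, is translating ``polytopal/PL sphere'' into the paper's inductive Evako-style definition. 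Your proposed inductive alternative (peeling one vertex from a colour class of size $\geq 2$) is the clean way to stay inside the paper's framework, and once one realises that the statement really being inducted on is ``$(\partial x)_f$ is a sphere'', it is essentially the paper's own induction made precise.
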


\paragraph{}
Jean Dieudonne \cite{Dieudonne1989} 
singled out {\bf incidence matrices}, {\bf Barycentric refinement}
and {\bf duality} as core innovations in {\bf combinatorial topology}. Theorems 1)-3)
address these pillars: {\bf 1)} incidence matrices are the blocks in the Dirac matrix $D$,
{\bf 2)} Barycentric refinements $G_n \to G_{n+1}$ lead from the discrete to a continuum limit.
{\bf 3)} Duality manifests in a {\bf hyperbolic structure}: every unit sphere $S(x)$ is 
the {\bf join} of dual spheres $S^+(x) * S^-(x)$. 

\paragraph{}
The topic relate to themes in the work of Simon: 
1) is  part of {\bf spectral theory}, {\bf band Jacobi matrices},
scattering e.g. \cite{Cycon}. 2) links to {\bf singular spectra}, 
{\bf multi-scale analysis} and linear algebra \cite{SimonFunctionalIntegration,SimonTrace}.
3) points to {\bf lattice gas} models in statistical mechanics \cite{SimonStatMechanics}:
level sets of functions form boundaries of {\bf droplets}, regions on which spins have 
a fixed value. The topics are illustrated using {\bf computer algebra},
where \cite{SimonNotices1992,SimonPower95,SimonPower97} decades ago compared different CAS's. 

\section{Illustration}

\paragraph{}
The {\bf octahedron} is the $2$-dimensional simplicial complex
\begin{tiny} $G=\{\{1\},\{2\},\{3\},\{4\},\{5\},\{6\}$, \end{tiny}

\begin{tiny} $\{1,3\},\{1,4\},\{1,5\},\{1,6\}$,  
    $\{2,3\},\{2,4\},\{2,5\},\{2,6\},\{3,5\},\{3,6\},\{4,5\},\{4,6\}$, 
    $\{1,3,5\},\{1,3,6\}$, $\{1,4,5\},\{1,4,6\}$, 
    $\{2,3,5\},\{2,3,6\}$, $\{2,4,5\},\{2,4,6\}\}$.
\end{tiny}
Its Hodge Laplacian $L=d d^* + d^* d=L_0 \oplus L_1 \oplus L_2$ contains as blocks
a $6 \times 6$ matrix $L_0$, a $12 \times 12$ matrix
$L_1$ and a $8 \times 8$ matrix $L_2$. 
The Euler characteristic is $\chi(G)={\rm str}(e^{-0 L}) = 6-12+8=2$ with
$b_0-b_1+b_2={\rm dim}({\rm ker}(L_0))-{\rm dim}({\rm ker}(L_1))+{\rm dim}({\rm ker}(L_2)) 
={\rm str}(e^{-\infty L})$. {\bf McKean-Singer} assured $\chi(G) ={\rm str}(e^{-t L})$ for all $t$.

\begin{figure}[!htpb]
\scalebox{0.15}{\includegraphics{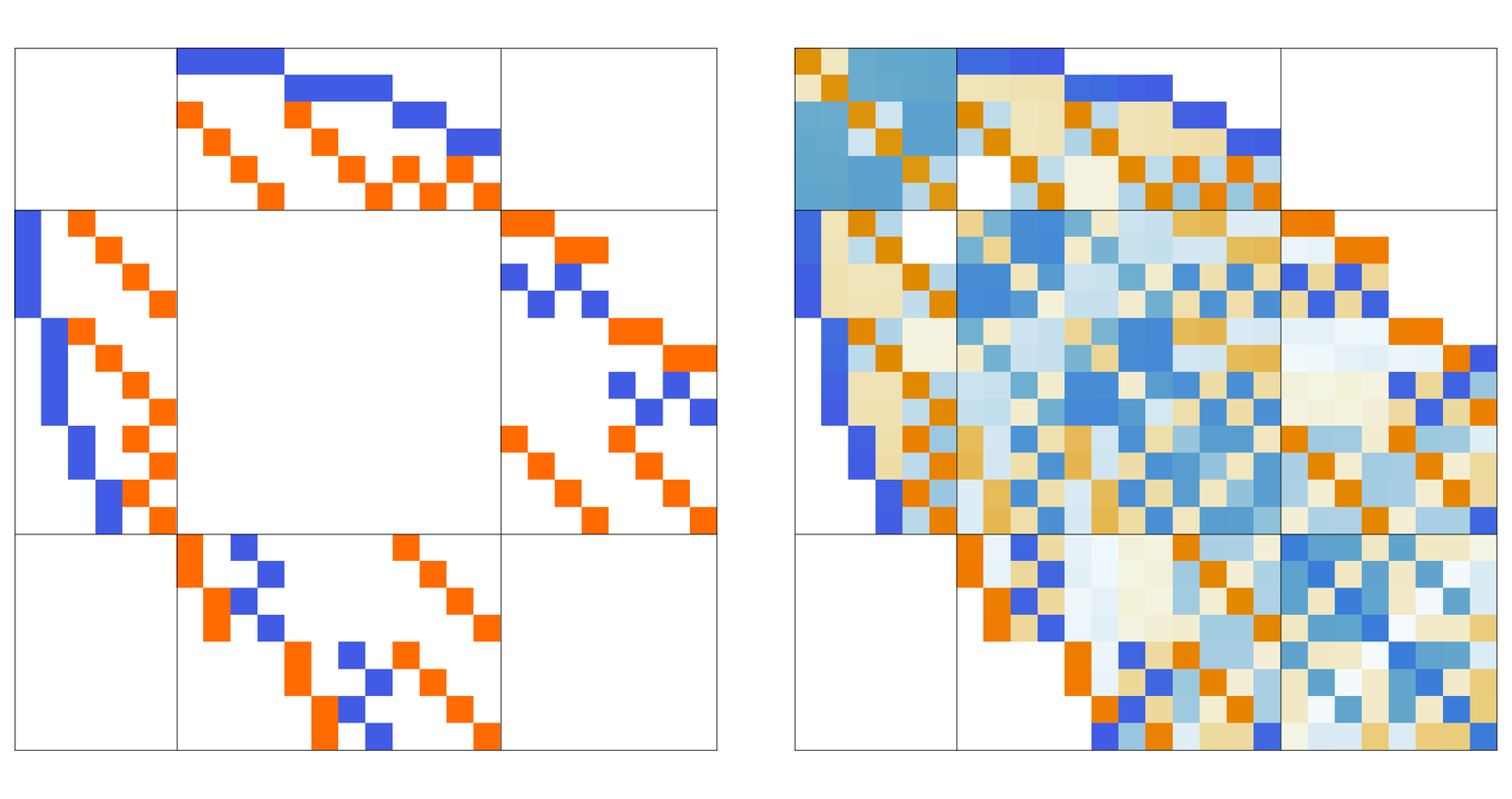}}
\scalebox{0.15}{\includegraphics{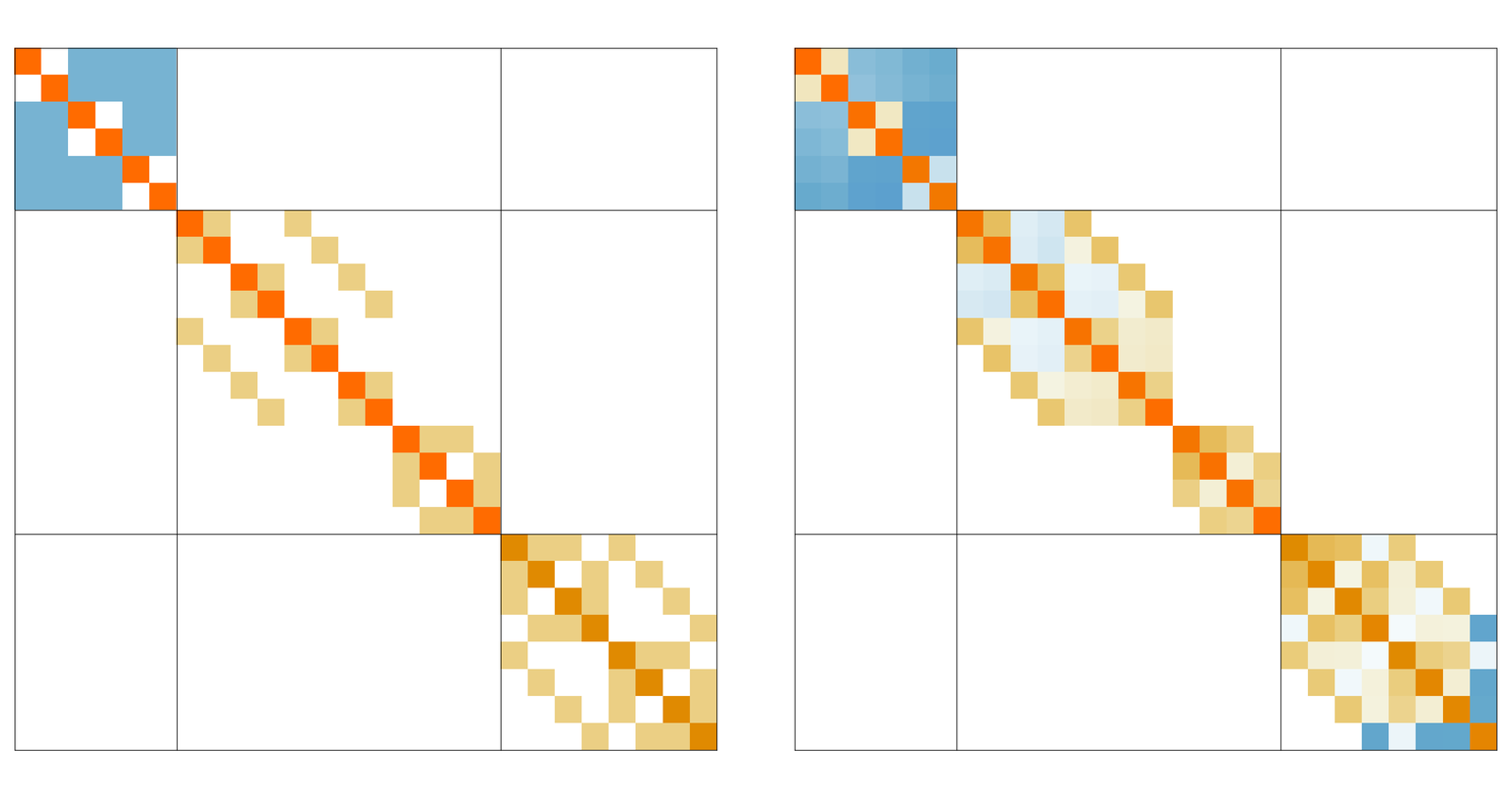}}
\label{Dirac}
\caption{
The matrix $D$ and Laplacian $L$ before and after the deformation. 
}
\end{figure} 

\paragraph{}
We see some initial refinements of the octahedron and then 
the spectrum,integrated density of states and Lyapunov exponent of the Hodge Laplacian of the 4th refinement $G_4$.

\begin{figure}[!htpb]
\scalebox{0.1}{\includegraphics{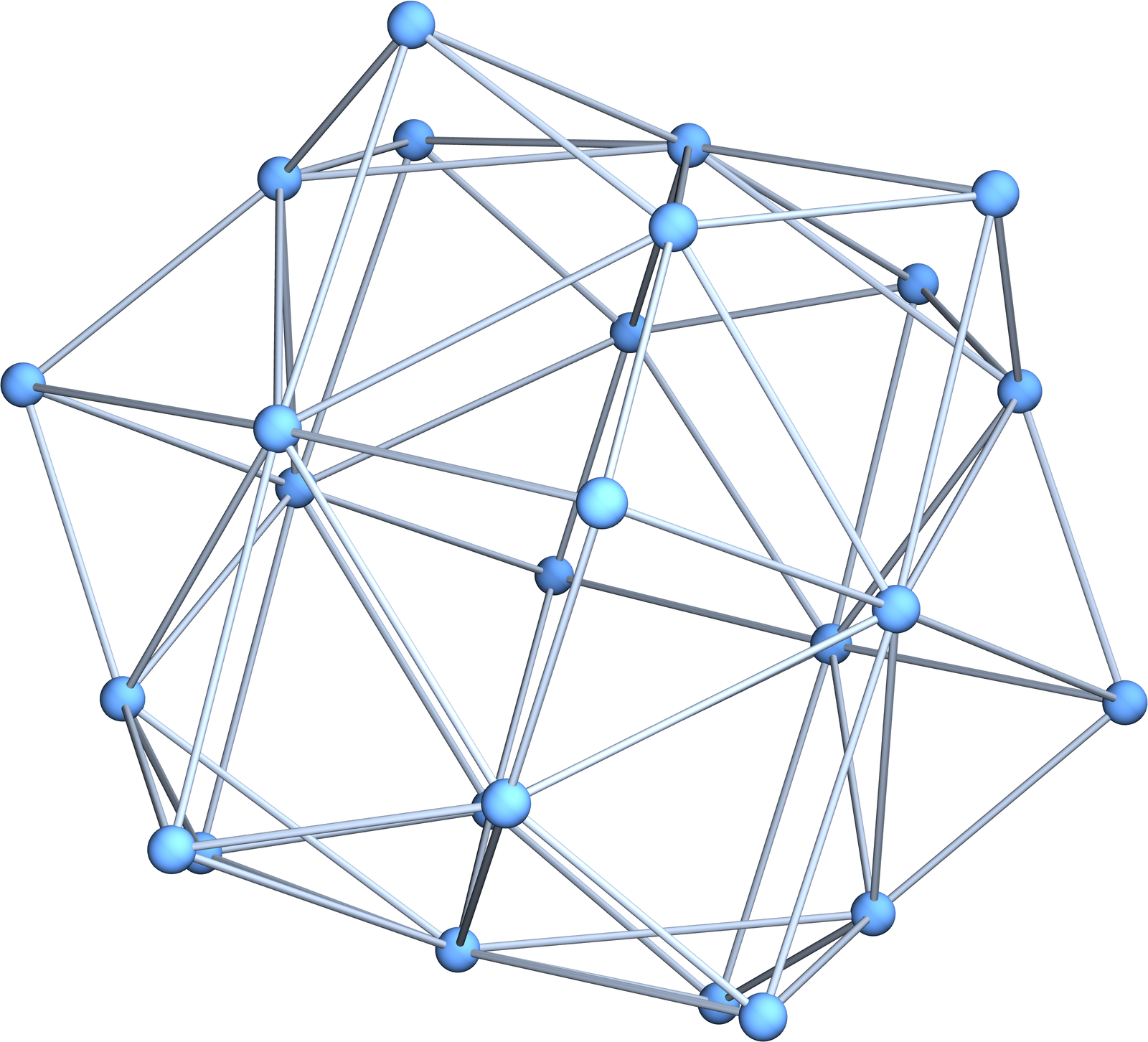}}
\scalebox{0.1}{\includegraphics{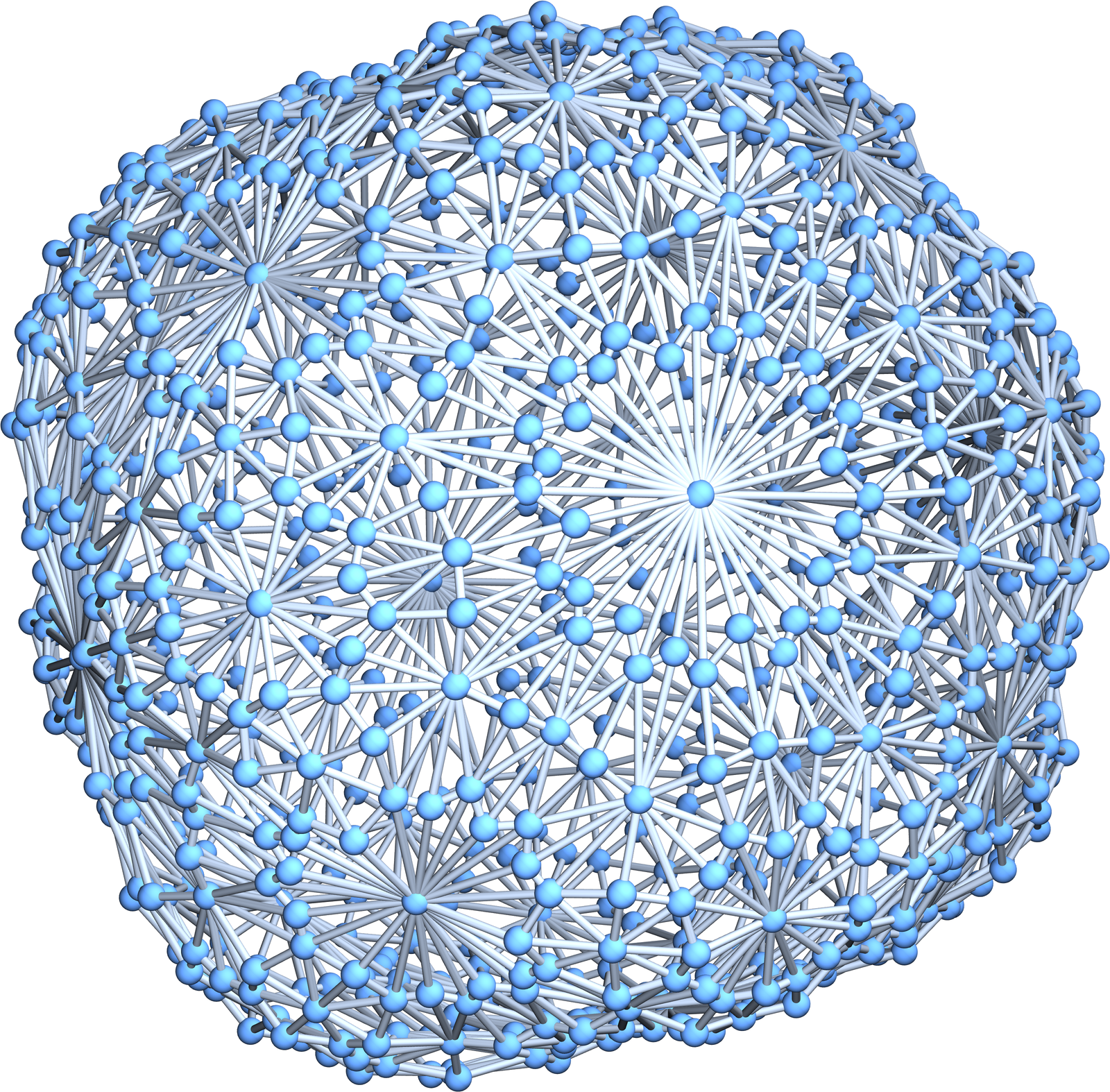}}
\scalebox{0.1}{\includegraphics{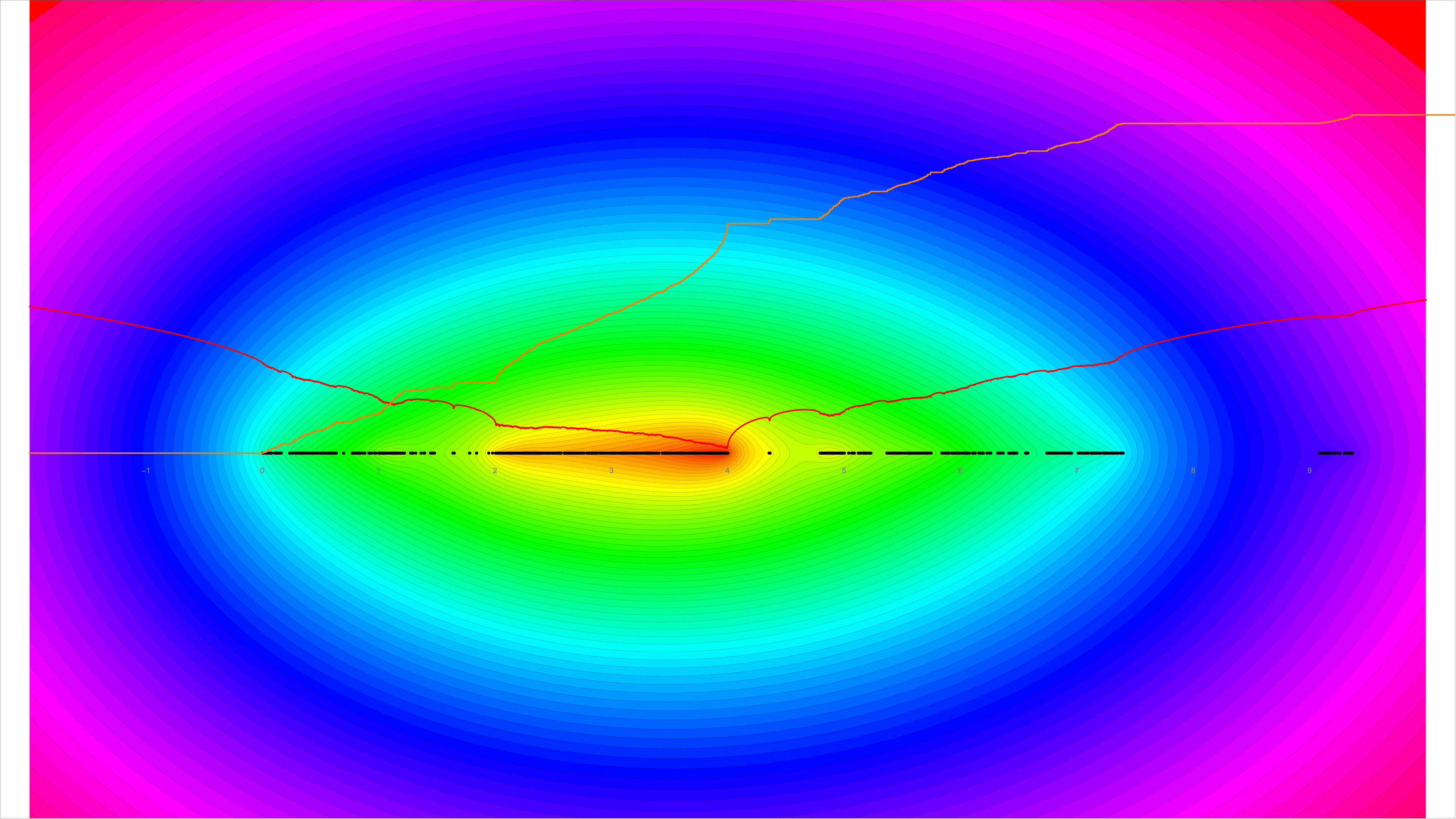}}
\label{Octahedron}
\caption{
Barycentric refinements of $G$ and the Hodge spectrum of $G_4$. 
}
\end{figure}

\paragraph{}
To illustrate the last theme, we pick as a host manifold $G$ a $4$-sphere $S^2 * S^1$, 
where $S^2$ is the {\bf octahedron} and $S^1$ the cyclic graph $C_{15}$ and where $*$ is the {\bf join}.
Now place $k=2$ particle types on some of the vertices of $G$. 
This map $f:V \to \{ 0,1,2 \}$ defines the interface $2$-manifold $G_f$.

\begin{figure}[!htpb]
\scalebox{0.1}{\includegraphics{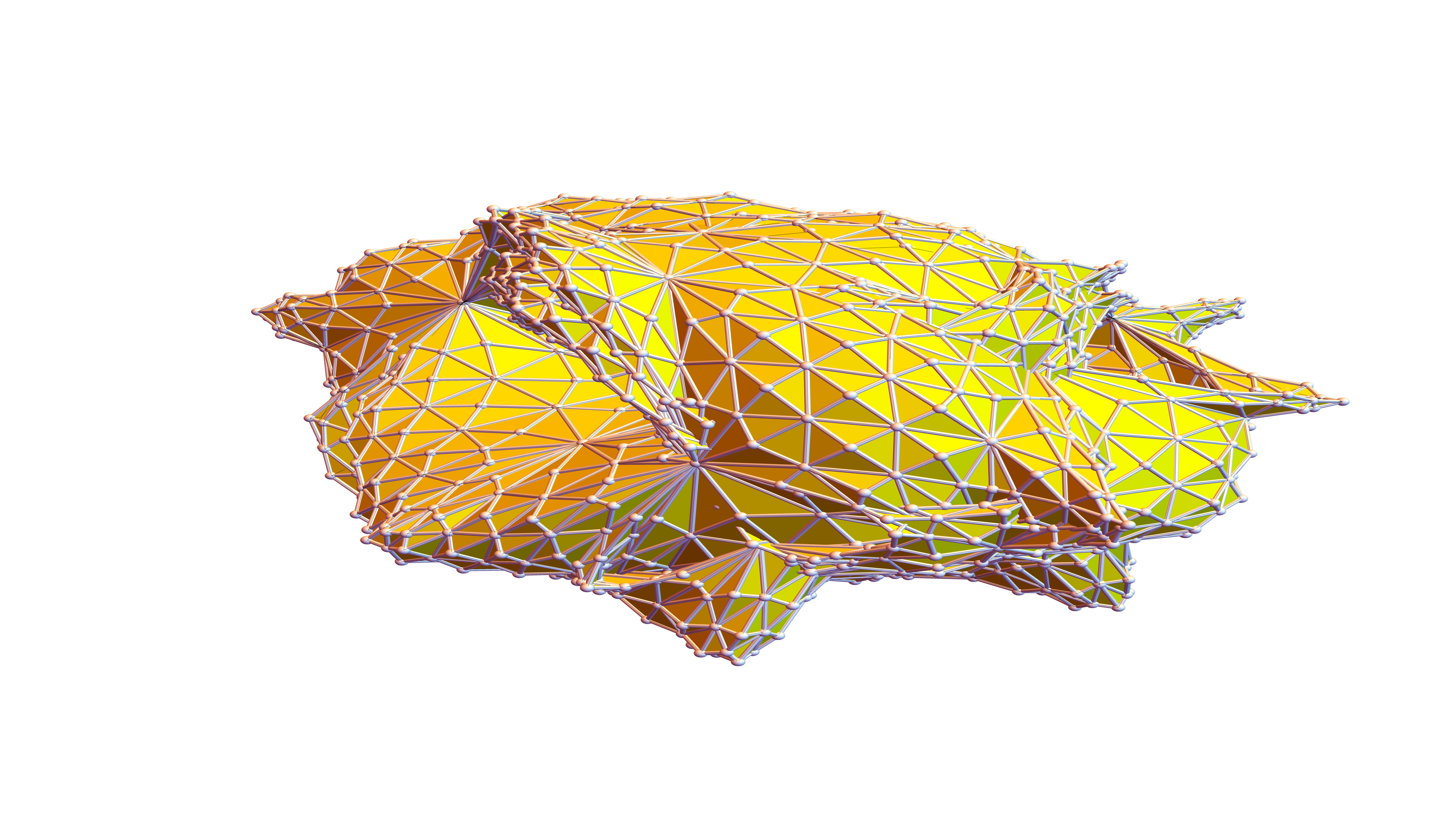}}
\scalebox{0.1}{\includegraphics{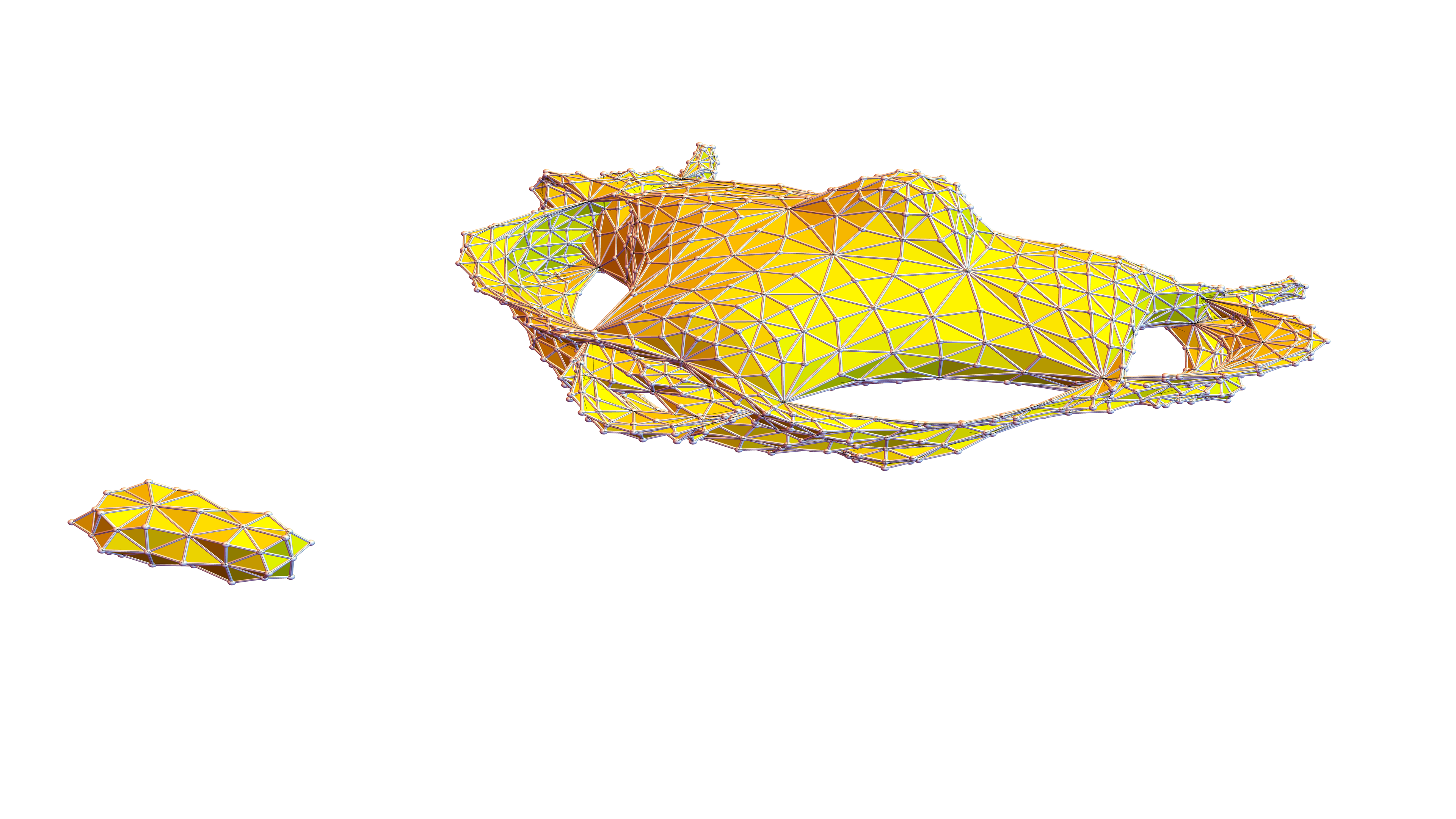}}
\label{Octahedron}
\caption{
Two co-dimension 2 surfaces of a 4-manifold $G$. They are 2-manifolds
and subgraphs of the Barycentric refinement $G_1$. 
}
\end{figure}

\pagebreak

\section{Block Jacobi Matrices}

\paragraph{}
While the first story is {\bf geometric}, it addresses on a central topic in {\bf analysis} \cite{Simon2017}:
given a particular element $L$ in a class of operators defined by some geometry, one can ask
about the structure of the set of operators that are isospectral to $L$.
For finite Jacobi matrices or Sturm-Liouville operators, 
where the underlying geometry is either discrete or continuous intervals or circles,
this relates to {\bf scattering theory} and {\bf orthogonal polynomials}.
For periodic Jacobi matrices, it is linked to algebraic geometry because the 
isospectral set is a torus and points on this torus are encoded by Abel-Jacobi maps. 
Isospectral deformations given by integrable systems play a role there.
\cite{KnillBarycentric2}.

\paragraph{}
If one looks for isospectral sets of Laplacians of higher dimensional 
manifolds, the isospectral set is in general discrete. No deformation is possible in general 
in higher dimensions by a result of Mumford (\cite{Moerbeke79} page 337). 
Sometimes a deformation is possible:
we can for example fix a smooth compact manifold and look at all metrics for which 
the Laplacian $L$ has the same spectrum. For example, there exists a 16-dimensional set
of isospectral flat tori \cite{Milnor64} $\mathbb{R}^{16}/D_{16}^+,\mathbf{R}^{16}/(E_8 \oplus E_8)$
where $D_{16}^+,E_8 \oplus E_8$ are integral unimodular lattices.

\paragraph{}
We gain more flexibility by writing the Laplacian $L$ as a square $L=D^2$
and allow the operator $D=d+d^*$ to develop a diagonal part while keeping a 
block tri-diagonal {\bf block Jacobi structure} $D_t=c_t+c_t^*+m_t$. 
This works also in the continuum but we do not discuss it here, 
as the deformed operators on manifolds would become
{\bf pseudo differential operators} $c_t+c_t*+m$.

\paragraph{}
While $c_t$ is still an exterior derivative, it is not isospectral to the standard one; also non-isospectral
and a prototype of a modification of the exterior derivative is the {\bf Witten deformation} 
\cite{Witten1982,Cycon}. But the new {\bf ``electromagnetic" Dirac matrix} $C_t=c_t+c_t^*$ 
still defines a distance via the {\bf Connes metric} \cite{Connes} which can be used on $G$
as $d_{Connes}(x,y) = {\rm sup}_{f, |Df|_{\infty}=1} |f(x)-f(y)|$.

\paragraph{}
The {\bf exterior derivative} $d$ on a finite abstract 
simplicial complex $G$ defines block triangular Jacobi matrices 
$D=d+d^*$ and a matrix $L=D^2=d d^* + d^* d$. 
Because $d^2=0$ and so $d*^2=0$, it follows that for any polynomial 
$g$, the matrix $g(D)$ has the same Jacobi block structure. The functional calculus
allows to define then $g(D)$ for an arbitrary real continuous function $g$, still
keeping the property that $g(D)$ has a Jacobi block structure. The k'th block size
is the number $f_k(G)$ of simplices of dimension $k$. 
Given such a block Jacobi matrix, one can write it in the form $A=A^+ + A^- + A^0$, where 
$A^+$ are the upper right blocks, where $A^-$ the lower left and where $A_0$ are the diagonal blocks. 
This decomposition reflects that $A^+=(A^-)^*$ maps $l^2(G_k)$ to $l^2(G_{k+1})$ and $A^0$ 
preserves the space $l^2(G_k)$. One obtains so the {\bf anti-symmetric }$B(A)=A^+-A^-$. 
The {\bf Lax differential equations} $D'=[B(g(D)),D]$ can be solved by $D_t=Q_t^* D_0 Q_t$, 
coming from the QR-decomposition $e^{-t g(D)}=Q_t R_t$ of the {\bf heat operator} $e^{-t g(D)}$. 
This was observed first by Symes for the Toda system \cite{Symes} then extended and simplified by 
Deift,Li and Tomei \cite{DeiftLiTomei,Symes}.

\subsection{Finite Spectral Geometry}

\paragraph{}
A {\bf finite abstract simplicial complex} is a finite set $G$ of 
non-empty sets that is closed under the operation of taking 
finite non-empty subsets. If $G$ has $n$ elements, it 
defines a $n \times n$ Dirac matrix $D=d+d^*$, where $d$ is the 
exterior derivative $d: l^2(G_k) \to l^2(G_{k+1})$ with 
$G_k=\{x, |x|=k+1 \}$. The incidence matrices $d_k$ in this {\bf co-chain complex}
are explicitly given as $df(x)=\sum_j (-1)^j f(\delta_j x)$, where $\delta_k:G_{k+1} \to G_k$ 
are the {\bf face maps} of the {\bf delta set} $G=\bigcup_{k=0}^q G_k$. 
The Laplacian $L=D^2=\oplus_{k=1}^q L_q$ is {\bf block diagonal} 
with respect to the decomposition $l^2(G)=\oplus_{k=0}^q l^2(G_k)$. The {\bf k-form Laplacians}
$L_k$ generalize the {\bf Kirchhoff matrix} $L_0= {\rm div}({\rm grad})= d_0^* d_0$ which 
is a discretized Laplacian. 

\paragraph{}
Following Hodge in the continuum \cite{Hodge1933} and Eckmann in the finite case 
\cite{Eckmann1944}, the 
{\bf k'th cohomology} $H^k(G)$ can be identified as the kernel of $L_k$ on $l^2(G_k)$. 
The dimension of the kernel $b_k={\rm ker}(H^k)$ is the $k$'th {\bf Betti number} 
of $G$. This leads to the {\bf Betti vector} $b(G)=(b_0,b_1,\dots, b_q)$.
The number $f_k$ elements in $G_k$ are the components of the {\bf $f$-vector}
$f(G)=(f_0,f_1,\dots f_q)$. The {\bf Euler-Poincar\'e formula} 
$\chi(G)=\sum_k (-1)^k f_k = \sum_k (-1)^k b_k$ can be verified by heat
deformation: ${\rm str}(e^{-tL})$ is for $t=0$ the left hand side
while the limit $t \to \infty$ gives the right hand side. 
The {\bf super trace} of a block diagonal matrix $K=\oplus_{k=0}^q K_k$ is defined as 
$\sum_{k=0}^{q} (-1)^k {\rm tr}(K_k)$. 
As $L=(d+d^*)^2$ is block diagonal, its super trace is defined.
The McKean-Singer symmetry is the statement that the non-zero spectrum of 
$L$ on $\oplus_k l^2(G_{2k})$ is the same than the non-zero 
spectrum on $\oplus_k l^2(G_{2k+1})$.
This implies that ${\rm str}(e^{-tL})$ is time independent and the
scattering argument proves Euler-Poincar\'e. The heat flow morphs the 
combinatorial Euler characteristic to the cohomological Euler characteristic. 
In the deformed situation, the statement still holds and also holds for
$e^{-t C_s}$, where $C_s=(c_s+c_s^*)^2$ is the Laplacian obtained by ignoring the 
diagonal part $m_s$ of $D_s = d_s+d_s^*+m_s$ (we use $s$ for the QR deformation here
because $t$ had been used for the heat deformation). The deformation $D=C_0  \to C_s$
has similar properties than the Witten deformation. The kernels of the blocks of
$C_s$ are still the Betti numbers but the non-zero spectrum of the operators can 
change. 

\begin{figure}[!htpb]
\scalebox{0.2}{\includegraphics{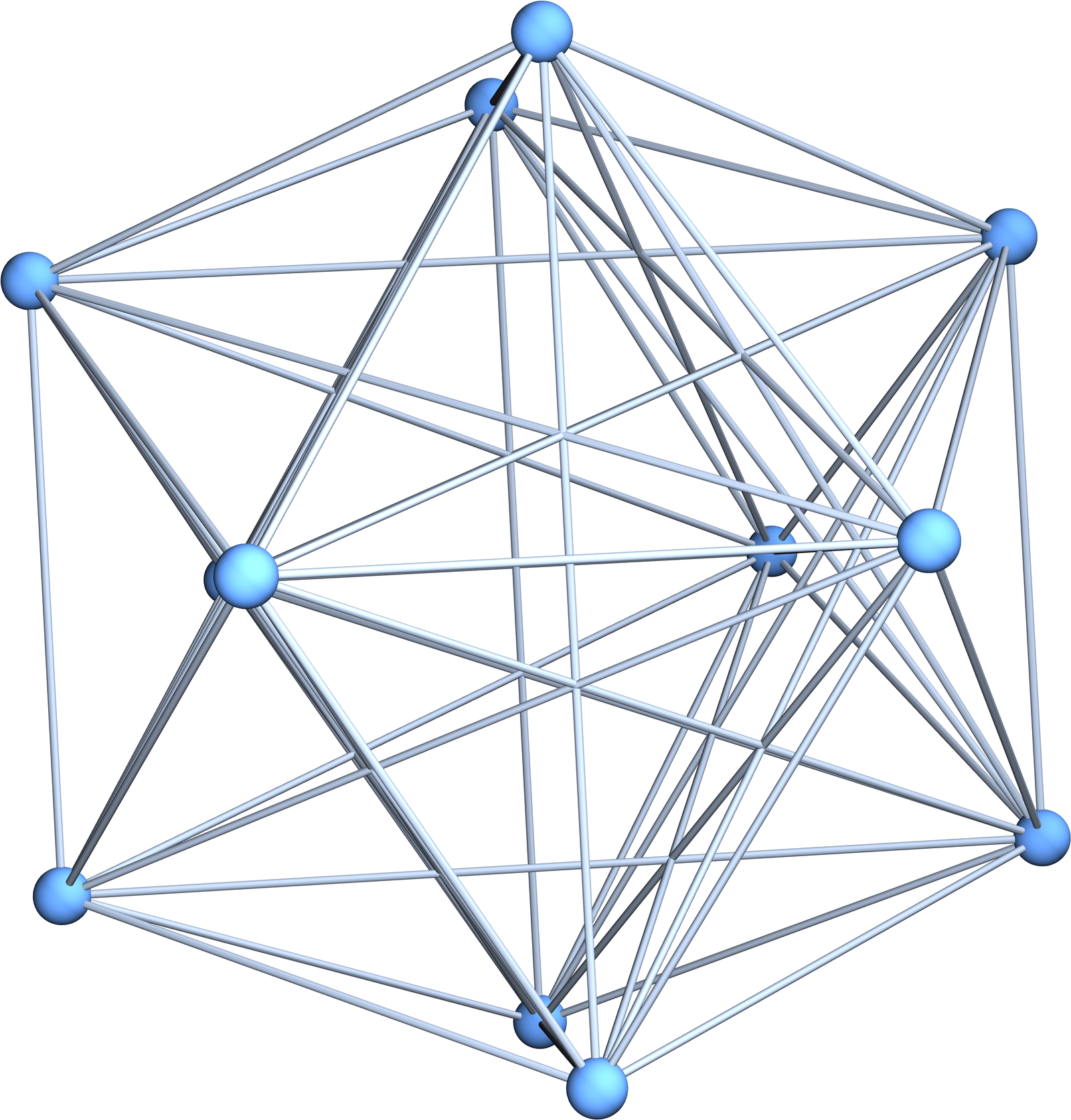}}
\label{5sphere}
\caption{
The {\bf $64$-cell} is the discrete $5$-dimensional cross-polytop. It 
is a discrete 5-sphere. It leads to a simplicial complex $G$ with $f$-vector 
$f(G) = (12, 60, 160, 240, 192, 64)$ and Betti vector 
$b(G) = (1,0,0,0,0,1)$ and Euler characteristic $0$.
}
\end{figure} 

\paragraph{}
The operator $D$ defines a natural metric in $G$, once
an algebra $A$ is fixed, so that one has a {\bf spectral triple} $(G,D,A)$ in the 
sense of \cite{Connes}. A natural choice is $A=l^{\infty}(G)$, the algebra 
of bounded functions on $G$, equipped with the supremum norm. 
The Connes {\bf pseudo distance} 
$d_{Connes}(x,y) = \sup_{f \in A, |[D,f]|_{\infty}=1} |f(x)-f(y)|$
on $G$ becomes a metric after possibly {\bf Kolmogorov identifying}
points that have zero pseudo distance. For a reasonable Dirac matrix $D$,
the condition $d_{Connes}(x,y)=0$ implies $x=y$ so that the Connes distance is already 
a distance. The metric space can so be deformed in a continuous way 
by changing $D$; and this is possible despite the fact that $G$ was finite. The deformation
does not affect other topological properties of $G$ like its topological dimension. 

\subsection{Deformation}

\paragraph{}
Let us look more closely at the mathematics of the 
{\bf isospectral Lax deformation} $\frac{d}{dt} D_t = D'=[B,D]$ with $B=g(D)^+-g(D)^-$.
It satisfies $D_t=Q^*_t D Q_t$ where $d_t=Q^*_t d_0 Q_t$.
We can write have $d_t=c_t+n_t$, where $c_t$ does map $k$-forms to $(k+1)$-forms
and $n_t$ preserves $k$-forms and $m_t=n_t+n_t^*$. We get $D_t= d_t + d_t^* = c_t +c_t^* + m_t$. 

\begin{figure}[!htpb]
\scalebox{0.5}{\includegraphics{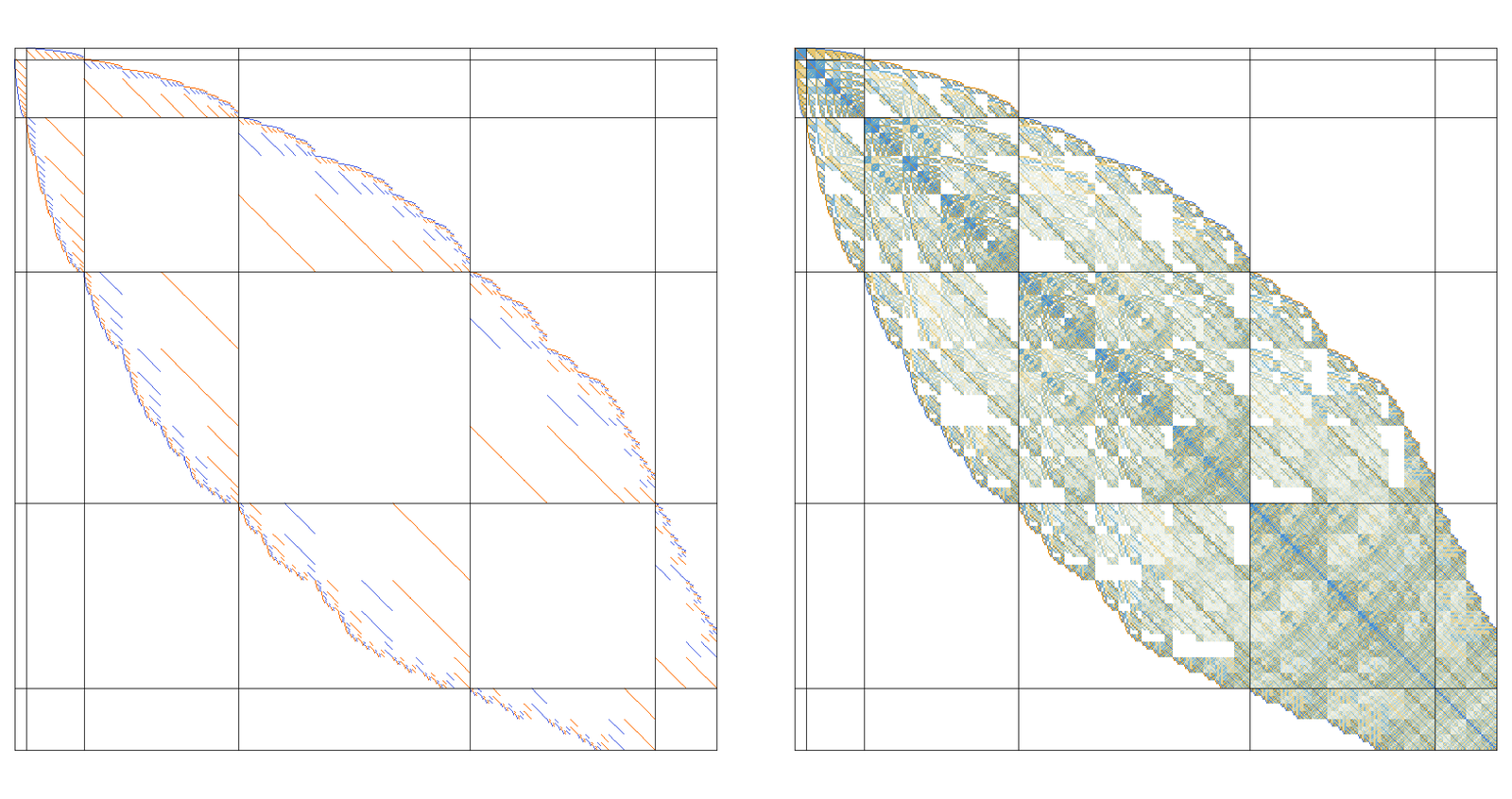}}
\label{Dirac}
\caption{
We see the Dirac matrix $D$ that belongs to the already mentioned {\bf 64 cell}.
It is a $728 \times 728$ block triangular Jacobi matrix
with 6 diagonal blocks. The blocks are the incidence matrices
which Poincar\'e already introduced. The isospectral deformation
$c_t+c_t^*+m_t$ to the right has a block diagonal part $m_t$. The
blocks $c_t$ define new exterior derivatives, leading to
to operators $C_t^2=(c_t+c_t^*)^2$ with the 
same Betti numbers. Like in the Witten deformation, the spectrum 
of $C_t$ changes but the kernel of $C_t$ does not. 
}
\end{figure} 

\paragraph{}
The {\bf Toda chain} (see \cite{Toda1967,Toda}) is a {\bf Lax pair} \cite{Lax1968}
$L'=[B,L]$ with $B=a-a^*$ and block tri-diagonal $L=a+a^*+b$. 
If $L$ is invertible, it is possible to write $L \oplus L_1=D^2$, 
where $D$ is defined on a doubled lattice and $L_1$ is a 
B\"acklund transformation of $L$. We are here in a much more general situation
as we can deform Laplacians of geometries in any dimension. 
The deformation applies for general geometries and also for compact Riemannian manifolds 
\cite{IsospectralDirac2}. 

\paragraph{}
As shown in the Lax case \cite{DeiftLiTomei,Symes}, the flow can 
be integrated using the QR-flow. The proof of Deift,Li and Tomei 
\cite{DeiftLiTomei} works also here. Let us reformulate Theorem~1 as: \\

\begin{center} \fbox{\parbox{10cm}{
If $\exp(-t g(D_0))=Q_t R_t$, 
then $D_t=Q^*_t D Q_t$ solves $D' = [B,D]$, with $B=g(D)^+-g(D)^-$. 
}} \end{center} 

\begin{proof} 
Look first at the QR picture:
Let $Q_t,R_t$ be obtained from the QR decomposition of the block triangular 
$e^{t g(D_0)}= Q_t R_t$. The Gram-Schmidt procedure shows that $Q_t$ is 
block triangular. Since $e^{t g(D_0)}$ is invertible, the QR decomposition is
unique. Define $D_t=Q_t^* D_0 Q_t$. We have now $(Q_t,R_t,D_t)$.  
Now to the differential equation picture: look at
$$   R'_t=[2g(D_t)^+ + g(D_t)^0] R_t,   Q'_t=Q_t [g(D_t)^--g(D_t)^+] $$
for the pair $R_t,Q_t$ with $R(0)=Q(0)=1$ and let $D_t=Q_t^* D_0 Q_t$ so
that $D_t'=[g(D_t)^--g(D_t)^+,D_t]$. 
The equations imply that $R_t$ remains upper triangular and 
that $Q_t$ remains orthogonal and $D_t$ block triangular. We have 
so an other way to describe $(Q_t,R_t,D_t)$. 

To see that these two pictures are compatible we show that
$H(Q,R) = e^{t g(D_0)} - Q_t R_t$ is an integral of motion of the 
differential equation: just differentiate
$\frac{d}{dt} e^{t g(D_0)} = g(D_0) Q_t R_t$ and
$$ \frac{d}{dt} (Q_t R_t) = 
   Q_t [g(D_t)^--g(D_t)^+] R_t + Q_t [2g(D_t)^+ + g(D_t)^0 ] R_t
                       = Q_t g(D_t) R_t  \; . $$
Since $R_t$ is invertible, 
$g(D_0) Q_t R_t = Q_t g(D_t) R_t$ follows from $Q_t^* g(D_0) Q_t = g(D_t)$.
We have shown that $\frac{d}{dt} H(Q(t),R(t)) = 0$. The differential equation
is compatible with the QR computation. 
\end{proof} 

\paragraph{}
Changing $g$ produces different flows. 
All these flows commute like in the Toda case \cite{Symes}. 
The dimensions of the kernels of $(c+c^*)^2$ agree
with the dimensions of the kernels of $(d+d^*)^2$. 
One can see this by noting that if $Du=(d+d^*)u=0$, and $u_t=Q u$,
then $(D u)' = D' u + D u' = [D,B] u + D Q' u = D B u - B D u - D B u = -B Du$
which is zero whenever $Du=0$. Differentiate again to see $(Du)''=-B'Du + B (Du)'=0+0$
which is again zero etc.

\paragraph{}
The {\bf McKean-Singer formula} still holds also in the deformed case. The 
super trace ${\rm str}(L^n)=0$ of any power $L_t^k$ of $L_t=D_t^2$. This implies
the McKean-Singer formula ${\rm str}(e^{-L_t}) = \chi(G)$. 
Comparing $t=0$ and $t=\infty$ establishes the {\bf Euler-Poincare equations}
$\chi(G) = \sum_k (-1)^k b_k(G)$. 

\paragraph{}
The deformation also works in division algebras like 
$\mathbb{C}$ or $\mathbb{H}$. To do the deformations there, look at 
$$ D' = [ g(D)^+- \overline{g(D)}^-+ i \beta g(D)^0, D]   $$
where $i$ is an imaginary unit or a unit quaternion. 
One can still associate with this a 
$QR$ decomposition by modifying the differential equations for $R,Q$
$$   R'_t=    [2g(D_t)^+ + g(D_t)^0 - i \beta g(D)^0] R_t ,   
     Q'_t=Q_t [\overline{g(D_t)}^--g(D_t)^+ + i \beta g(D)^0   ] $$
Now, $Q(t)$ is unitary and $R(t)$ block upper triangular complex. 
In the limit $t \to \infty$, it aligns more and more to the Schr\"odinger 
wave equation $Q_t' = i m_t \beta Q_t$. 

%  PROOFREAD SO FAR. CLARIFY THE MAIN PROOF. 

\section{Singular spectra} 

\paragraph{}
The second story deals with a {\bf scaling symmetry} in geometry \cite{KnillBarycentric2}.
The refinement step is the {\bf Barycentric refinement}. Scaling considerations
are ubiquitous in mathematical physics. Almost periodicity for example 
manifest, if smaller parts get expanded to larger structures through
substitutions leading to geometries with {\bf self-similarity}. This also works in discrete
settings like {\bf substitution dynamical systems} \cite{Queffelec}.
Multi-scale analysis appears in solid state physics, for example, when
looking at the mathematics of {\bf Anderson localization}, the phenomenon of 
random systems to display point spectrum. See for example \cite{FroehlichSpencer}.

\paragraph{}
Scaling symmetries appear in other parts of mathematical physics. 
They are familiar on {\bf unimodular maps} $f \in C^2([-1,1],[-1,1])$ 
satisfying $f(0)=1$ $f''(x)<0$ and $\lambda=f(1)<0$. 
The {\bf Feigenbaum-Cvitanovi\'c functional equation} has a fixed point of
$T(g)(x) = \frac{1}{\lambda} g(g(\lambda x))$, explaining universality. 
See for example \cite{Lanford84}. 

\paragraph{}
Similarity symmetry also appears in probability theory. There is the
renormalization map $X \to T(X)=\overline{X+X}$ on the Hilbert space of $L^2$ random variables, 
where $\overline{Y}$ is the {\bf centered normalized version} $(Y-{\rm E}[Y])/\sigma(Y)$ of 
{\bf mean} $0$ and variance $1$. The {\bf Gaussian distribution} is a fixed point of $T$. 
The fact that $T^n(X)$ converges to this fixed point for any initial $L^2$ random variable 
is known as the {\bf central limit theorem.}

\subsection{Barycentric refinements}

\paragraph{}
The {\bf Barycentric refinement} of a finite abstract simplicial complex $G$
is a new complex $G_1 = \{ x_1 \subset G, \forall x,y \in x_1$ either $x \subset y$ or $y \subset x \}$.
If $G$ is the Whitney complex of a graph, the Barycentric refinement comes from
the refined graph, where the complete subgraphs are the vertices and where two new vertices are connected if 
one is contained in the other. The Barycentric refinement of a graph 
$C_n$ for example is $C_{2n}$. The Barycentric refinement of the triangle $K_3$ 
is a wheel graph. 

\paragraph{}
An other example of deformation is {\bf soft Barycentric refinement} \cite{SoftBarycentric}. 
It has the property that in dimension $q=2$, the maximal vertex degrees stay bounded. 
The soft refinement avoids simplices of dimension 
$(q-1)$ that are contained in two or more $q$-dimensional simplices. When we start with 
a triangle $G=K_3$ for example, all the $G_n$ are part of a {\bf hexagonal lattice}. The interior
points have all vertex degree 6 and so are flat points. When we start with a tetrahedron 
$G=K_4$, then we have graphs $G_n$ which are $3$-balls. Their boundary surface is a 2-sphere
which undergoes a Barycentric refinement \cite{SoftBarycentric} which parallels the situation
reviewed here. In two dimensions, we get a piecewise smooth limiting density of states with
a {\bf van Hove singularity}.
% Import["~/text3/graphgeometry/art-simon/code.m"];
% s = CompleteGraph[{2, 2, 2, 2}]; s1 = SB[s, 2]; Union[Regge3[s1]]

\subsection{Density of states}

\paragraph{}
Given an operator on $G$ with a discrete set of non-negative eigenvalues $\lambda_i$,
the {\bf spectral function} $F_G(x)=\lambda_{[n x]}$ encodes the eigenvalues
$\lambda_1 \leq \dots \leq \lambda_n$. To make this work for $x \in [0,1]$, we always assume 
$\lambda_0=0$ (and all other eigenvalues are non-negative). 
The function $F$ is piecewise constant, jumps at eigenvalues, is
monotone and $F_G(0)=0$ and $F_G(1)=\lambda_n$.
The {\bf integrated density of states} $F^{-1}$ is a monotone $[0,1]$-valued 
function on $[0,\infty)$ with the understanding that it is piecewise
constant so that its {\bf distributional derivative}
is the {\bf density of states} probability measure $dk=(F^{-1})'$ on 
$\mathbb{R}^+$, a discrete pure point measure with support on the
spectrum of $L$. Point-wise convergence of $F$ implies point-wise convergence of 
$F^{-1}$ and so weak-* convergence of the density of states measure. 
The function $F^{-1}$ would be called {\bf cumulative distribution function} CDF
in probability theory. 

\paragraph{}
Since we will look at sequences of Barycentric refinements, where the first one is already
the Whitney complex of a graph, we can assume without loss of generality that the 
simplicial complex $G$ comes from a graph $\Gamma=(V,E)$:
the individual simplices are then the vertex sets of complete subgraphs of $\Gamma$. 

\subsection{Distance} 

\paragraph{}
For two graphs on the same vertex set, define the {\bf graph distance}
$d(G,H)$ as the minimal number of edges which need to be modified in 
order to get from $G$ to $H$. If no a priori host graph is given,
let $n(G,H)$ as the minimum number of vertices which a host graph containing both $G,H$
can have. The distance function $d(G,H)$ can be extended to graphs which do not 
have the same vertex set; we just define $d(G,H)$ as the minimal possible distance
we can get by embedding both in a common graph with $n(G,H)$ elements.

\paragraph{}
Having the graph distance defined as the minimal cardinality of the symmetric 
difference $E(G) \Delta E(H)$ if $G,H$ are both subgraphs of $K_n$ one has:

\begin{lemma}
$||F_G-F_H||_1 \leq 4 d(G,H)/n(G,H)$.
\end{lemma}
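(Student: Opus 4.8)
The plan is to bound the $L^1$ distance between the two spectral functions $F_G$ and $F_H$ by comparing their inverses, the integrated density of states functions, and then to control how much the eigenvalue list can move when we change a bounded number of edges. Recall that $\|F_G - F_H\|_1$ on $[0,1]$ equals the area between the two monotone step functions, which by the standard layer-cake/transpose identity equals $\|F_G^{-1} - F_H^{-1}\|_1$ measured on $\mathbb{R}^+$; equivalently it is the transport cost between the two eigenvalue-counting measures $dk(G)$ and $dk(H)$, each of which puts mass $1/n$ on each of the $n$ eigenvalues (with the convention $\lambda_0 = 0$). So it suffices to show that the sorted eigenvalue vectors of $L(G)$ and $L(H)$ differ, on average over the $n$ coordinates, by at most $4\,d(G,H)$.

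First I would reduce to the case $d(G,H)=1$, i.e. $G$ and $H$ are subgraphs of $K_n$ differing in exactly one edge, and then iterate (the triangle inequality for $\|\cdot\|_1$ plus additivity of $d$ along a shortest edit path gives the general bound). For a single edge flip, the key input is that $L(G)$ and $L(H) = L(G) + \Delta$ differ by a perturbation $\Delta$ of small rank: adding or deleting one edge changes only the entries indexed by simplices that contain that edge. For the graph Laplacian $L_0$ alone this $\Delta$ has rank $2$ and the interlacing/Weyl inequalities give $\sum_i |\lambda_i(G) - \lambda_i(H)| \le 2\|\Delta\|_{\mathrm{tr}}$-type control; but here $L$ is the full Hodge Laplacian on all of $l^2(G)$, so I need the version of Lidskii's inequality stating that for Hermitian $A,B$ the sorted eigenvalues satisfy $\sum_i |\lambda_i(A) - \lambda_i(B)| \le \|A - B\|_{\mathrm{tr}}$ (the trace/nuclear norm), and then I must bound $\|\Delta\|_{\mathrm{tr}}$ by a constant independent of $n$. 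This is where the dimension of $G$ would a priori enter — flipping one edge perturbs $L_k$ for several $k$ — but the statement as written has no dependence on $\dim(G)$, so the intended reading must be that $d(G,H)$ is a \emph{graph} distance and $n(G,H)$ counts \emph{vertices}, while $F$ is built from the \emph{graph} Laplacian $L_0$, or else the constant $4$ already absorbs a bounded local combinatorial factor; I would make that precise, settling on the graph-Laplacian interpretation where one edge flip is a rank-$\le 2$ perturbation with $\|\Delta\|_{\mathrm{tr}} \le 2$ (the two nonzero eigenvalues of $\Delta$ being $\pm$ something of size at most $1$ after the diagonal-degree bookkeeping cancels, leaving effectively $\|\Delta\|_{\mathrm{tr}} \le 2$).

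Assembling: a single edge flip moves the eigenvalue list by $\sum_i |\lambda_i(G)-\lambda_i(H)| \le \|\Delta\|_{\mathrm{tr}} \le 2$ in the worst case, but the factor $4$ rather than $2$ comes from two sources that I would track carefully — first, the eigenvalues are being compared as functions on $[0,1]$ via $F$, and a rank-$2$ perturbation can shift the step function both up and down, so the integral of the absolute difference of the \emph{inverse} functions picks up the full variation on both sides; second, the normalization $\lambda_0 = 0$ is pinned, so the relevant comparison is of $n$ (not $n-1$) values. Dividing by $n = n(G,H)$ and summing over the $d(G,H)$ edges in a shortest edit path yields $\|F_G - F_H\|_1 \le 4\,d(G,H)/n(G,H)$. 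When $G,H$ have different vertex sets one first embeds both into a common host on $n(G,H)$ vertices (padding with isolated vertices contributes only zero eigenvalues and does not change $F$ beyond the agreed normalization), reducing to the equal-vertex-set case.

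The main obstacle I expect is pinning down the correct constant and the correct object: one must verify that a single edge modification is genuinely a bounded-trace-norm perturbation of whichever operator $F$ is built from, that Lidskii's inequality is being applied in the right (Hermitian, full-spectrum) form, and that the passage $\|F_G - F_H\|_1 = \|F_G^{-1} - F_H^{-1}\|_1$ (a Fubini argument on the region between two monotone graphs) is legitimate with the $\lambda_0 = 0$ convention — after that the inequality with constant $4$ falls out, and the constant is not sharp, merely convenient. The rest is routine triangle-inequality iteration.
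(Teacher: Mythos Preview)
Your approach is correct in outline and would prove the lemma, but it differs from the paper's and your handling of the constant is muddled. The paper does not pass through $F^{-1}$ or use any layer-cake identity: since $F_G$ is piecewise constant, equal to $\lambda_k$ on an interval of length $1/n$, one has $\|F_G-F_H\|_1 = \tfrac{1}{n}\sum_k |\lambda_k(G)-\lambda_k(H)|$ directly. The paper then invokes the \emph{entrywise} Lidskii--Last inequality $\sum_k|\alpha_k-\beta_k|\le \sum_{i,j}|A_{ij}-B_{ij}|$ (stated and proved as the next lemma in the paper), and observes that a single edge $(i,j)$ touches precisely the four entries $L_{ii},L_{jj},L_{ij},L_{ji}$, each by $\pm 1$. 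That is the whole source of the constant $4$: four entries, not any subtlety about inverse functions or pinned $\lambda_0$.

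Your route through the trace-norm Lidskii inequality also works, but you have the perturbation analysis slightly off: the single-edge Kirchhoff perturbation is $\Delta=(e_i-e_j)(e_i-e_j)^T$, which is rank \emph{one} with trace norm exactly $2$, so your method actually yields $\|F_G-F_H\|_1\le 2\,d(G,H)/n$, a stronger bound. The two ``sources'' you propose for recovering the extra factor of $2$ are not real; you have simply proved more than was claimed and need not apologize for it. What the paper's entrywise argument buys is transparency (no eigenvalue computation for $\Delta$ at all) at the price of a looser constant; what your argument buys is a sharper constant at the price of some unneeded detours and the confusion just noted.
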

\begin{proof} 
Kirchhoff Laplacians satisfy $\sum_{i,j} |L_{ij}-K_{ij}| \leq 4 d(G,H)$
because each edge $(i,j)$ affects only the four matrix
entries $L_{ij}, L_{ji},L_{ii},L_{jj}$. 
For two subgraphs $G,H$ of a common graph with $n$ vertices, 
and Laplacians $L,H$, and eigenvalues $\alpha=(\alpha_1, \dots, \alpha_n)$
and $\beta=(\beta_1, \dots, \beta_n)$, the inequality gives
$||\alpha-\beta||_1 \leq 4 d(G,H)$ so that $||F_{G}-F_{H}|| \leq 4 d(G,H)/n$
\end{proof} 

\paragraph{}
For the Hodge Laplacians, the situation is a bit more technical. On each $k$-form sector,
the constant 4 needs to be replaced by a constant that depends on the maximal 
k-simplex degree of $G$. But this constant $C_k(G,H)$ grows exponentially slower than 
$d(G,H)$. We still have $||F_{G}-F_{H}|| \leq C_k(G,H)  d(G,H)$ if we define $d(G,H)$ 
as the graph distance between graphs $\Gamma(G)=(G,E(G))$ and $\Gamma(H)=(H,E(H))$, 
where $E(G) = \{ (x,y) \in G^2, x \subset y$ or $y \subset x$ and $x \neq y \}$. We
focus for simplicity on the 0-form Laplacian. 

\subsection{Linear algebra}

\paragraph{}
The {\bf Lidskii-Last inequality} for two $n \times n$ matrices $A,B$ with eigenvalues
$\alpha_1 \leq  \cdots \leq \alpha_n$ and $\beta_1 \leq \cdots \leq \beta_n$ is

\begin{lemma} $||\alpha-\beta||_1 \leq \sum_{i,j=1}^{n} |A-B|_{ij}$
\cite{SimonTrace,Last1995} for any two complex self-adjoint $n \times n$ matrices $A,B$ with
eigenvalues $\alpha_1 \leq \alpha_2 \leq \dots \leq \alpha_n$ and
$\beta_1 \leq \beta_2 \leq \dots \leq \beta_n$.  
\end{lemma}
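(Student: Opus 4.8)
The plan is to prove the Lidskii--Last inequality $\|\alpha-\beta\|_1 \le \sum_{i,j} |A-B|_{ij}$ for self-adjoint $A,B$ with eigenvalues sorted increasingly. The key observation is that the right-hand side $\sum_{i,j}|A-B|_{ij}$ is an upper bound for the trace norm $\|A-B\|_1 = \sum_k |\mu_k|$, where $\mu_k$ are the eigenvalues of $A-B$: indeed, writing $C = A-B$ and using that the trace norm is the sum of singular values, one has $\|C\|_1 \le \sum_{i,j}|C_{ij}|$ because each rank-one ``cell'' $C_{ij} e_i e_j^*$ has trace norm exactly $|C_{ij}|$, and the trace norm is subadditive. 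So it suffices to prove the sharper statement $\|\alpha-\beta\|_1 \le \|A-B\|_1$, i.e. that sorting the eigenvalues is a contraction in $\ell^1$ relative to the trace norm of the difference.

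For that sharper bound, I would invoke the classical Lidskii theorem: if $A = B + C$ with all three matrices self-adjoint, then the vector $\alpha$ of sorted eigenvalues of $A$ lies in the convex hull of $\{\beta + P\gamma\}$ over permutation matrices $P$, where $\gamma$ is the vector of sorted eigenvalues of $C$. Equivalently, $\alpha - \beta$ is majorized by $\gamma$ in the sense of Schur. Since $\|\cdot\|_1$ is a symmetric gauge function (permutation-invariant and monotone under majorization), majorization of $\alpha-\beta$ by the eigenvalue vector of $C$ gives $\|\alpha-\beta\|_1 \le \|\gamma\|_1 = \|C\|_1 = \|A-B\|_1$. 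Combining with the first paragraph's elementary estimate $\|A-B\|_1 \le \sum_{i,j}|A-B|_{ij}$ yields the lemma. The references \cite{SimonTrace,Last1995} supply exactly the Lidskii-type majorization input, so this step may simply be cited.

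An alternative self-contained route, avoiding the full Lidskii machinery, is to go through a one-parameter interpolation $A_s = B + sC$, $s\in[0,1]$, and use that the sorted eigenvalues $\alpha_i(s)$ are Lipschitz in $s$ with $\sum_i |\alpha_i'(s)| \le \|C\|_1$ almost everywhere: at a point where the eigenvalues are simple, first-order perturbation theory gives $\alpha_i'(s) = \langle \psi_i, C \psi_i\rangle$ for an orthonormal eigenbasis $\psi_i$, and then $\sum_i|\langle\psi_i,C\psi_i\rangle| \le \sum_i \sum_k |\gamma_k| |\langle \psi_i, \phi_k\rangle|^2 = \|C\|_1$ using that $(|\langle\psi_i,\phi_k\rangle|^2)_{ik}$ is doubly stochastic, where $\phi_k$ is an eigenbasis of $C$. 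Integrating from $0$ to $1$ gives $\|\alpha-\beta\|_1 \le \|C\|_1 \le \sum_{ij}|C_{ij}|$. One must handle eigenvalue crossings, but the set of $s$ with a degeneracy has measure zero and the bound passes to the limit by continuity, so this is a minor technicality rather than a real obstacle.

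The main obstacle is not any single hard estimate but making precise the passage from eigenvalue perturbation to the $\ell^1$ bound at degeneracies, and deciding how much of Lidskii's majorization theorem to reprove versus cite. Given that the paper already leans on \cite{SimonTrace,Last1995}, I would keep the argument short: state the reduction to $\|A-B\|_1$, cite Lidskii/Last for $\|\alpha-\beta\|_1 \le \|A-B\|_1$, and finish with the elementary cell-wise bound $\|A-B\|_1 \le \sum_{i,j}|A-B|_{ij}$.
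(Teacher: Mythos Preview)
Your argument is correct and structurally identical to the paper's: both split the estimate into (i) Lidskii's inequality $\|\alpha-\beta\|_1 \le \|C\|_1$ for $C=A-B$, cited from \cite{SimonTrace}, and (ii) the elementary bound $\|C\|_1 \le \sum_{k,l}|C_{kl}|$. The only difference is in step (ii): the paper, following Last, diagonalizes $C=U^*\operatorname{Diag}(\gamma)U$ and writes $\sum_i|\gamma_i|=\sum_{i,k,l}(-1)^{m_i}U_{ik}C_{kl}\overline{U_{il}}$, then bounds $|\sum_i(-1)^{m_i}U_{ik}\overline{U_{il}}|\le 1$ via Cauchy--Schwarz on the unitary columns; you instead use subadditivity of the trace norm on the rank-one decomposition $C=\sum_{ij}C_{ij}e_ie_j^*$. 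Both are valid one-liners. Your interpolation alternative in the third paragraph is a genuine extra (not in the paper) and gives a self-contained route avoiding the Lidskii citation, at the cost of the degeneracy bookkeeping you mention.
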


\begin{proof}
The formulation as well as the
reduction to the standard Lidksii theorem is due to Last \cite{Last1995}:
denote with $\gamma_1 \leq \cdots \leq \gamma_n$ denote the eigenvalues of the 
selfadjoint matrix $C:=A-B$ and let $U$ be the unitary matrix diagonalizing $C$
so that ${\rm Diag}(\gamma_1, \dots ,\gamma_n)=UCU^*$:
\begin{eqnarray*}
 \sum_i |\gamma_i| &=& \sum_i (-1)^{m_i} \gamma_i
                    = \sum_{i,k,l} (-1)^{m_i} U_{ik} C_{kl} U_{il} \\
                    &\leq& \sum_{k,l} |C_{kl}| \cdot
                                    |\sum_i (-1)^{m_i} U_{ik} U_{il}|
                    \leq \sum_{k,l} |C_{kl}| \; .
\end{eqnarray*}
Lidskii's inequality \cite{SimonTrace} gives 
$\sum_j |\alpha_j-\beta_j| \leq \sum_j |\gamma_j|$. Combining the two
things gives $\sum_j |\alpha_j-\beta_j| \leq \sum_{k,l} |C_{kl}|$. 
\end{proof}

\paragraph{}
As an application, we see that the $l^1$ distance between the Kirchhoff spectra
of two graphs $G,H$ can be measured by their distance:

\begin{coro}
$\sum_j |\lambda_j(G)-\lambda_j(H)| \leq C d(G,H)$.
\end{coro}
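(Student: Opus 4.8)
The plan is to combine the two previous lemmas directly. The Corollary states $\sum_j |\lambda_j(G)-\lambda_j(H)| \leq C\, d(G,H)$ for the Kirchhoff (0-form) spectra of two graphs $G,H$, and this is essentially an immediate consequence of the Lidskii--Last inequality (Lemma~3) applied to the Kirchhoff Laplacians, together with the entrywise bound already observed in the proof of Lemma~2.

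First I would fix a common host graph $K_n$ with $n=n(G,H)$ vertices in which both $G$ and $H$ embed, and let $L=L_0(G)$ and $K=L_0(H)$ be their Kirchhoff Laplacians, viewed as $n\times n$ self-adjoint (indeed symmetric, positive semidefinite) matrices on the same vertex set. Their eigenvalues, sorted increasingly, are exactly the values $\lambda_j(G)$ and $\lambda_j(H)$ appearing in the statement (padding with the zero eigenvalue as the paper's convention on $\lambda_0=0$ allows). Then I would invoke Lemma~3 with $A=L$, $B=K$: it gives
\begin{equation*}
\sum_j |\lambda_j(G)-\lambda_j(H)| \;\leq\; \sum_{k,l} |L_{kl}-K_{kl}| .
\end{equation*}
Next I would quote the entrywise estimate from the proof of Lemma~2: modifying a single edge of the host graph changes only the four Kirchhoff entries indexed by its two endpoints, so $\sum_{k,l}|L_{kl}-K_{kl}| \leq 4\, d(G,H)$, where $d(G,H)$ is the minimal size of the symmetric difference $E(G)\,\Delta\,E(H)$ over embeddings into a common host with $n(G,H)$ vertices. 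Chaining the two inequalities yields $\sum_j |\lambda_j(G)-\lambda_j(H)| \leq 4\, d(G,H)$, which is the claimed bound with the explicit constant $C=4$.

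There is essentially no obstacle here; the only points that require a little care are bookkeeping ones. One must make sure the eigenvalue lists are compared after embedding into the \emph{same} vertex set of size $n(G,H)$, so that Lemma~3 (which is stated for two $n\times n$ matrices) applies verbatim; the extra trivial eigenvalues introduced by the padding contribute zero to the $l^1$ difference and so do not affect the bound. One should also note that the constant $C=4$ is for the 0-form (Kirchhoff) Laplacian, as flagged in the paper; for the higher Hodge blocks the same argument runs with $4$ replaced by the degree-dependent constant $C_k(G,H)$ from the earlier discussion, but the Corollary as stated is for the Kirchhoff spectrum, so $C=4$ suffices.
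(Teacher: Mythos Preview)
Your proof is correct and follows exactly the paper's approach: apply the Lidskii--Last inequality (Lemma~2 in the paper) to the Kirchhoff matrices, then use the entrywise bound $\sum_{k,l}|A_{kl}-B_{kl}|\leq 4\,d(G,H)$ already noted in the proof of Lemma~1. Your additional remarks about embedding into a common host of size $n(G,H)$ and padding with zero eigenvalues are a helpful clarification the paper leaves implicit, but the core argument is identical.
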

\begin{proof}
If $A,B$ are Kirchhoff matrices of two graphs $G,H$ of distance $d(G,H)$, 
then $\sum_{k,l} |A_{kl}-B_{kl}| \leq 4 d(G,H)$. Now use the above lemma. 
\end{proof}

\subsection{Central limit}

\paragraph{}
The statement of Theorem~2 is restated as: \\

\begin{center} \fbox{\parbox{10cm}{
For any $G$, weak limit $\lim_{n \to \infty} dk(G_n)$ exists and is unique and 
only depends on the maximal dimension $q$ of $G$. 
}} \end{center} 

\begin{proof}  % From Univerality  KnillBarycentric2
{\bf (i)} If $G$ has maximal dimension $q$, 
there exists an explicit $(q+1) \times (q+1)$ matrix $A$ 
which maps the $f$-vector of $G_m$ to the $f$-vector of $G_{m+1}$. 
It is {\bf upper triangular} with $A_{ij} = S(i,j) j!$, where $S(i,j)$ are the
{\bf Stirling numbers} $S(i,j)$ of the second kind. The diagonal entries are $A_{kk}=k!$.
It follows that there exists a constant $C$ such that 
$f_q(G_0) ((q+1)!)^m/C \leq f_q(G_m) \geq f_q(G_0) C ((q+1)!)^m$ and
$f_k(G_0) (k+1)!)^m/C \leq f_k(G_m) \geq f_k(G_0) C (k+1)!)^m$ for every $0 \leq k \leq q$. \\
{\bf (ii)} If $G=K_{q+1}$ is the complete graph, then $F_{G_m}$ converges in $L^1[0,1]$ and 
point-wise in $(0,1)$ Proof: For every $m$, there are $(q+1)!$ sub-graphs of $G_{m+1}$ 
isomorphic to $G_m$, forming a cover of $G_{m+1}$. 
Any of their intersections in a lower dimensional set. Since the number of
vertices of this intersection grows with an upper bound $C q!^m$,
we have $d(G_{m+1},\bigcup_j G_m^{(j)} ) \leq (q+1)! C (q!)^m$. This shows
$||F_{G_m}-F_{G_{m+1}}||_1 \leq C (q+1)! (q!)^m/((q+1)!)^m = C/(q+1)^{m-1}$.
This Cauchy sequence in $L^1([0,1])$ has a limit in that Banach space. \\
{\bf (iii)} The Barycentric refinement is a contraction in a suitable metric.
Every $G_m$ is part of a finite union of graphs $H_m$ subgraphs isomorphic to $G_{m-1}$ and
a rest graph $L$ of dimension smaller than $q$. 
Then $F_{G_m}$ and $F_{H_m}$ have the same limit because $F_{H_m}=F_{G_{m-1}}$.
The Barycentric evolution of the boundary of refinements of $K_{q+1}$ as well as $L$ grows
exponentially smaller. Given two complexes $A,B$, then
$d(A_m,B_m) \leq C_q (d!)^m$ so that $||F_{A_m}-F_{B_m}||_1 \leq 4 C_q/(d+1)^m$. 
Let $m_0$ be that $c=C'/(q+1)^{m_0}<1$. 
Define a new distance $d'(A,B) = \sum_{k=0}^{m_0-1} d(A_k,B_k)$,
so that $d'(A_1,B_1) \leq c d'(A,B)$. Apply the {\bf Banach fixed point theorem}. \\
{\bf (iv)} The functions $F_{G_m}$ converge uniformly on every compact sub-interval 
of $(0,1)$: since each $F_{G_m}$ is a monotone function in $L^1([0,1])$,
the exponential convergence on compact subsets of $(0,1)$ follows from
exponential $L^1$ convergence. (See \cite{LewisShisha}).
In dimension $q$, one has $||F_{G_m}||_1 \to (q+1)!$ exponentially fast.
Indeed, as the number of boundary simplices grows like $(q!)^m$ and the number
of interior simplices grows like $((q+1)!)^m$, the convergence is of the order 
$1/(q+1)^m$.  By the {\bf Courant-Fischer mini-max principle}
$F_{G}(1) = \lambda_{n} = {\rm max} (v,Lv)/(v,v) 
  \geq {\rm max}(L_{xx}) = {\rm max}({\rm deg}(x))$
grows indefinitely, so that the $L^{\infty}$ convergence on compact 
sub-intervals of $(0,1)$ can not be extended to $L^{\infty}[0,1]$. 
\end{proof} 

\paragraph{}
It follows that the density of states of $G$ converges ``in law" to a 
universal density of states which only depends on $q$, hence the name ``central limit theorem".
The analogy is to think of $G$ or its Laplacian $L$ as the random variable and the spectrum
$\sigma(L)$ of the Laplacian $L$ as the analog of the probability density and of the
Barycentric refinement operation as the analogue of adding and normalizing two independent
identically distributed random variables.

\section{Manifolds}

\paragraph{}
The third and last story was developed over the last 10 years
\cite{KnillSard,DiscreteAlgebraicSets,DiscreteAlgebraicSets2}.
It is a mixture of discrete geometric analysis, discrete calculus or discrete 
algebraic geometry. It can be seen as a discrete version of the {\bf theorem of Morse-Sard} 
\cite{Morse1939,Sard1942}.
It also touches upon statistical mechanics, if one reinterprets a function taking finitely 
many values on a network as a spin configuration model.
In statistical mechanics, one is interested in thermodynamic quantities given 
by expectations and usually takes an infinite volume limit. We also just look at interface 
structures where all possible function values are attained simultaneously on a simplex. This
produces submanifolds if the host structure is a manifold. 

\paragraph{}
We formulate things in the class of {\bf discrete manifolds}. 
This can be done without any reference to the continuum. Manifolds
are invariant under Barycentric refinements. What is the  analog
of a {\bf variety}? In a classical algebraic geometry frame work, we look for
solutions of systems of polynomial equations in Euclidean or projective host spaces. In the discrete, 
we can just look at maps into a finite set of $(k+1)$ elements and then ask for all simplices on which 
all color values are taken. For $k=1$ for example, this means that we look at all simplices,
where both values $0$ and $1$ are attained. These simplices form {\bf interface cases}.

\paragraph{}
The surprise is that whatever function $f:V(G) \to \{0, \dots, k\}$ we chose on
the $q$-manifold $G$, the level surface is either a $(q-k)$-dimensional manifold if it
is not empty. A priori, these level surfaces $H$ are open sets in $G$ (meaning that if 
$x \in H$ and $x \subset y$ then $y \in H$). But they become again closed sets in 
the Barycentric refinement (meaning sub-simplicial complexes.

\subsection{Topology}

\paragraph{}
On any finite abstract simplicial complex $G$, there is a finite, although 
non-Hausdorff, topology on $G$ which honors connectivity and dimension properties of
$G$. It is defined by the {\bf poset structure} on $G$ given by inclusion $x \subset y$.
This {\bf Alexandrov topology} \cite{Alexandroff1937}
$\mathcal{O}$ on $G$ has as a {\bf topological basis}
the set of {\bf stars} $U(x) = \{ y \in G, x \subset y\}$
which are the smallest open sets containing $x$. In positive dimension $q>0$, 
this topology is never Hausdorff because for an edge $x=(a,b)$, the 
smallest open sets $U(\{a\}), U(\{b\})$ containing $\{a\},\{b\}$ 
always have the intersection $U(x)$. For more, see \cite{FiniteTopology}.

\paragraph{}
The Alexandrov topology on $G$ comes with a hyperbolic structure and duality.
The {\rm cores} $C(x) = \{ y \subset x \}$ are subsimplicial complexes and so closed.
The {\bf stars} $U(x) = \{ y \in G, x \subset y\}$ are open. The complement
of an open set is closed and vice versa. The two sets $U(x)$ and $C(x)$ split the 
unit sphere $S(x)=\overline{U(x)} \setminus U(x)$ into 
a {\bf stable sphere} $S^-(x)=C(x) \setminus U(x)$ and an {\bf unstable sphere}
$S^+(x)=U(x) \setminus C(x)$. Indeed, $S(x)=S^-(x) \star S^+(x)$,
where $A * B$ is the {\bf join} $A \cup B \cup \{ x \cup y, x \in A, y \in B \}$.
A function $f: G \to K_k=\{0,\dots,k\}$ defines an open {\bf level set}
$G_f=\{ x \in G, f(x) = K_k \}$. The name hyperbolicity comes from the fact that
a hyperbolic fixed point $x$ of a smooth map on a manifold has stable and unstable
manifolds such that any small enough sphere $S_r(x)$ is homeomorphic to 
$S_r^+(x) * S_r^-(x)$, where $S_r^{\pm}$ are the intersections of the stable or
unstable manifolds with $S_r(x)$.

\subsection{Discrete Manifolds}

\paragraph{}
We can inductively define a {\bf discrete q-manifold} as a finite abstract simplicial complex
for which all unit spheres $S(x)$ are $(q-1)$-spheres. A $q$-sphere is a $q$-manifold
that is contractible, if one vertex is taken away. A complex $G$ inductively is 
called {\bf contractible} 
if there exists a simplex $x$ such that both $G \setminus U(x)$ and $S(x)$ are contractible. 
The void $\emptyset$ is assumed to be a $(-1)$-sphere and the 
one-point complex $K=\{ \{1\} \}$ is assumed to be contractible. 

\paragraph{}
All this can be reformulated in graph theory and essentially equivalent
because a graph defines a Whitney complex and a complex defines a graph. 
The Whitney complex of the graph of a complex is the Barycentric refinement. 
A {\bf discrete $q$-manifold} is a finite simple graph $G=(V,E)$ 
for which every unit sphere $S(v)$ is a discrete $(q-1)$-sphere. 
For a vertex $v \in V$, the {\bf unit sphere} $S(v)$ is the subgraph
generated by the vertices directly attached to $v$. 
A {\bf discrete m-sphere} is a discrete $q$-manifold which 
has the property that removing some vertex renders it contractible. 
Inductively, a graph is called {\bf contractible}, if there exists a vertex 
$v \in V$ such that both $S(v)$ and $S \setminus v$ are contractible.
The $1$-point graph $K_1=1$ is contractible. The empty graph is the $(-1)$-sphere. 

\paragraph{}
Let $G_k$ denote the set of $K_{k+1}$ subgraphs. They are also called 
$k$-simplices and $f_k=|G_k|$ is their cardinality. 
We have $G_0=V,G_1=E$. The {\bf Euler characteristic} of $G$ was
then defined by Schl\"afli as
$\chi(G) = \sum_{k=0}^q (-1)^k f_k = f_0-f_1+f_2-f_3+ \cdots + (-1)^q f_q$.
For $2$-manifolds it is $\chi(G)=f_0-f_1+f_2=V-E+F$.

\paragraph{}
A graph without edges is a $0$-manifold. A $0$-manifold is a $0$-sphere, if $V=2,E=0$
(removing a vertex produces $K_1$ which is contractible by definition). 
Every connected $1$-manifold is a $1$-sphere, a circular graph $C_n$ with $n \geq 4$. 
Every finite 2-manifold is either a $2$-sphere $S^2$ 
or a connected sum of tori or projective planes: 
$M=S^2, M=\mathbb{T}^2 \# \cdots \# \mathbb{T}^2$ or 
$M=\mathbb{P}^2 \# \cdots \# \mathbb{P}^2$.
A $2$-sphere is characterized as the only 2-manifold of Euler characteristic 2. 
The $16$ cell and the $600$ cells are examples of 3-spheres. The join of two
1-spheres is a 3-sphere. The join of a k-sphere with a m-sphere is 
a (k+m+1)-sphere. The join of $G$ with the 0-sphere is called {\bf suspension}. 

\paragraph{}
Euler's formula $\chi(G) = V-E+F=2$ for $2$-spheres $G$ generalizes to higher dimension.
The $0$-sphere has $\chi(M)=V=2$, every $1$-sphere has $\chi(M)=V-E=0$. 
Every $2$-sphere has $\chi(M)=V-E+F=2$. This pattern continues:
If $M$ is a $q$-sphere, then $\chi(M) = 1+(-1)^q$. This is the {\bf general Euler gem
formula}. The proof goes using induction with respect to dimension $q$: for $q=0$, we have $\chi(M)=2$.
The induction assumption is that all $(q-1)$-spheres $S$ satisfy $\chi(S) = 1+(-1)^{q-1}$. 
Pick a vertex $v$. As the unit sphere $S(v)$ is a $(q-1)$-sphere and 
$S(v) = B(v) \cap G \setminus v$, where both the unit ball $B(v)$ and $G \setminus v$
are contractible with Euler characteristic $1$, we have, using the induction assumption,
$\chi(M) = \chi(G \setminus v) + \chi(B(v)) - \chi(G \setminus v \cap B(v)) 
         = 2 - (1-(-1)^{q-1}) = 1+(-1)^q$. 

\subsection{Level Surfaces}

\paragraph{}
In the continuum, manifolds can be constructed as level surfaces
of functions. An example is  $x^2+y^2=1, x^2+y^2+z^2=9$, which is a disconnected 
$1$-dimensional variety consisting of 2 circles. This can be emulated also 
in the discrete. Take an arbitrary function on vertices $V$ which takes values in 
$Z_k=\{0, \dots, k\}$. It defines a new graph $G_f$, 
where the vertices are the set of complete subgraphs on which $f$ attains all $k$ values. 
Connect two of these points by an edge, if one is contained in the other. 
The new graph $G_f$ is a sub-graph of the {\bf Barycentric refinement} of $M$:

\paragraph{}
If $G$ is a simplicial complex and $x \in G$, define the core 
$S^-(x) = \{ y \in G, y \neq x, y \subset x \}$ and the star 
$S^+(x) = \{ y \in G, y \neq x, x \subset y \}$. The unit sphere $S(x)$ is the boundary of
$S^+(x)$ in the Alexandrov topology generated by all $S^+(x)$. 

\begin{lemma}
a) If $G$ is a complex and $H$ is contractible, then $G*H$ is contractible. \\
b) The join of a $k$-sphere with a $m$-sphere is a $k+m+1$-sphere.  \\
c) If $H$ is a complex and $G,G*H$ are spheres, then $H$ is a sphere.  \\
d) For any simplicial complex $G$ and any $x \in G$, we have $S(x) = S^-(x) * S^+(x)$.  \\
e) If $G$ is a $q$-manifold, then its Barycentric refinement is a $q$-manifold.
\end{lemma}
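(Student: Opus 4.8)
The plan is to prove (a)–(e) in the stated order, since each part feeds into the next, with (e) being the real target. For (a), I would argue by induction on the number of vertices of $H$. If $H$ is contractible, pick a vertex $v\in H$ such that $S_H(v)$ and $H\setminus v$ are both contractible. In $G*H$, the unit sphere of $v$ is $G * S_H(v)$, which is contractible by the induction hypothesis (fewer vertices), and $(G*H)\setminus v = G*(H\setminus v)$, again contractible by induction; hence $G*H$ is contractible. The base case $H=K_1$ gives $G*K_1$, the cone over $G$, which is contractible by collapsing the apex. For (b), I would induct on $k+m$. Write $G=S^k$, $H=S^m$; removing a vertex $v$ from $G$ makes $G\setminus v$ contractible, and $(G*H)\setminus v=(G\setminus v)*H$ is contractible by part (a). It remains to check that every unit sphere in $G*H$ is a $(k+m)$-sphere: for $v\in G$ it is $S_G(v)*H$, a $(k-1)$-sphere joined with an $m$-sphere, hence a $(k+m)$-sphere by the induction hypothesis; symmetrically for $v\in H$; and the join-structure on new mixed simplices is handled by the same bookkeeping. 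This establishes $G*H$ is a $(k+m+1)$-sphere.

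For (c), the converse direction, I would again use induction on dimension together with the observation that $S(x)$ inside $G*H$ restricted to vertices of $H$ recovers $S_H(x)$, so if $G*H$ is a sphere then the unit spheres of $H$-vertices are forced to be spheres of the right dimension; removing a vertex from $H$ and using that $G*(H\setminus v)$ must be contractible (because it is $(G*H)\setminus v$) together with part (a) read in reverse gives that $H\setminus v$ is contractible. Part (d) is essentially definitional: for $x\in G$ a simplicial complex, the closed star $\overline{U(x)}$ consists of all simplices comparable to $x$, and its boundary $S(x)$ splits into those strictly below $x$ (that is $S^-(x)$) and those strictly above (that is $S^+(x)$), with every simplex of $S(x)$ uniquely of the form $y^- \cup y^+$ with $y^-\subset x$, $y^+\supset x$; this is exactly the join $S^-(x)*S^+(x)$. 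I expect (a)–(d) to be routine inductions once the join/link bookkeeping is set up carefully.

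The crux is (e). Let $G$ be a $q$-manifold and let $G_1$ denote its Barycentric refinement, whose vertices are the simplices of $G$ and whose simplices are flags (totally ordered chains) of simplices of $G$. I would fix a vertex of $G_1$, i.e.\ a simplex $x\in G$ of dimension $j$, and compute its unit sphere $S_{G_1}(x)$. A chain adjacent to $x$ is built from simplices strictly contained in $x$ together with simplices strictly containing $x$; the former contribute the Barycentric refinement of the boundary sphere $\partial x$, which is a $(j-1)$-sphere (as $x$ is a simplex, $\partial x$ is the boundary of a $j$-simplex), and its refinement is a $(j-1)$-sphere by the $q$-sphere case of the statement already implicit in the manifold setup; the latter contribute the Barycentric refinement of the link of $x$ in $G$, which — because $G$ is a $q$-manifold — is a $(q-j-1)$-sphere, whose refinement is again a $(q-j-1)$-sphere. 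The key structural identity is that $S_{G_1}(x)$ is the \emph{join} of these two pieces, by the same reasoning as in part (d) applied in the poset of simplices. Then part (b) gives that $S_{G_1}(x)$ is a $((j-1)+(q-j-1)+1)$-sphere $=$ a $(q-1)$-sphere, which is exactly what is needed for $G_1$ to be a $q$-manifold. The main obstacle is making the two "refinement of a sphere is a sphere" substeps rigorous without circularity — this requires either a separate induction on dimension establishing that Barycentric refinement preserves the sphere property in dimensions below $q$, or invoking the fact (provable by the same join decomposition, bootstrapped from low dimensions) that $\partial x$ and links refine to spheres. I would organize the whole proof of (e) as a single induction on $q$, so that the statement "$(q-1)$-manifolds refine to $(q-1)$-manifolds, hence spheres to spheres" is available as the inductive hypothesis when analyzing the two join factors.
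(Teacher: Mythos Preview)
Your plan is correct and, for (a)--(d), matches the paper's arguments closely: induction on the size of the join for (a), induction on $k+m$ using (a) for (b), induction on dimension for (c), and the direct poset unpacking for (d).

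For (e) both you and the paper go through the join decomposition $S_{G_1}(x)=S^-(x)*S^+(x)$ from (d), but the two factors are handled differently. You identify each of $S^-(x)$ and $S^+(x)$ as the Barycentric refinement of a lower-dimensional sphere (the boundary of $x$ and the link of $x$ in $G$ respectively), run an outer induction on $q$ so that ``refinement of a sphere is a sphere'' is available in lower dimensions, and then apply (b) to the join. The paper instead identifies only $S^-(x)$ directly as the boundary $(k-1)$-sphere and then invokes (c) (written ``d)'' there, presumably a slip) together with $S(x)$ being a sphere to force $S^+(x)$ to be one. As written that last step looks circular, since $S(x)$ being a sphere is precisely the conclusion sought; your explicit induction on $q$ is the cleaner way to close the argument, and is in fact what is needed even to justify the paper's claim that $S^-(x)$ is a $(k-1)$-sphere \emph{as a subgraph of} $G_1$.

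One small point on (c): your phrase ``part (a) read in reverse'' needs care, since (a) is not stated as a biconditional; the contractibility of $H\setminus v$ has to be extracted by its own inductive step rather than by literally reversing (a). The paper's one-line proof of (c) leaves this equally implicit, so this is not a divergence from the paper so much as a shared piece of bookkeeping to fill in.
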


\begin{proof}
a) Use induction with respect to the number of elements in $G*H$. It is clear if $G$
and $H$ have one element. For $v$ in $G$, we have $S_{G * H}(v) = S_G(v) * H$ 
and $G*H \setminus v = (G \setminus v)*H$. 
For $v$ in $H$, we have $S_{G * H}(v) = G*S_H(v)$ and $G*H \setminus v = G*(H \setminus v)$. 
Because all of them are smaller they are contractible by induction.  \\
b) Use induction with respect to $n=k+m$. The induction assumption is that
for $k=0,m=0$ the join is $C_4$ which is a 1-sphere. 
Assume the statement is true for all $k+m$. Take a $(k+1)$-sphere $G$ and
a $m$-sphere $H$. Given $v \in G$, we have $S_{G*H}(v) = S_G(v) * H$. The right 
hand side is a sphere because $S_G(v)$ is a $k$ sphere and $H$ is a $m$-sphere. 
The same argument shows that if $w \in H$, $S_{G*H}(w)=G * S_H(w)$ is a sphere. 
Since every unit sphere in $G*H$ is a sphere, we have a manifold. To show that
$G*H \setminus w$ is contractible with $w \in H$, use that $G*H \setminus w = G*(H \setminus w)$
and a).  The case $v \in G$ is analog.  \\
c) Use induction with respect to dimension. Since $G,S_H(x)*G$ are spheres 
also $S_H(x)$ is a sphere.
d) By definition, every $y \in S(x)$ is either strictly contained in $x$ or then contains 
$x$ strictly. Every element in $S^-(x)$ is contained in $x$ and so in $S^+(x)$ so that
we indeed have a join. \\
e) We have to show that every $S(x)$ in the Barycentric refinement 
is a $(q-1)$-sphere. Since $S(x)=S^-(x) * S^+(x)$
we only need to show that $S^-(x)$ and $S^+(x)$ are spheres. But $S^-(x)$ is the boundary 
sphere of $x$ and so is a $(k-1)$-sphere if $x$ is $k$-dimensional. 
Having that $^-(x)$ and $S(x)$ are both spheres, d) shows that $S^+(x)$ is a sphere. 
\end{proof}

\paragraph{}
We prove now Theorem 3):  \\

\begin{center} \fbox{\parbox{10cm}{
If $G$ is a $m$-manifold and $f: G \to \{0,1,\dots, k\}$ is an arbitrary function, then either 
$G_f$ is empty or then $G_f$ is a $(m-k)$-manifold. 
}} \end{center} 

\begin{proof}
Let $x$ be a $n$-simplex on which $f$ takes all values. This means $f(x)=\{0,1, \dots, k\}$. The
graph $S^-(x) = \{ y \subset x, y \neq x\}$ is a $(n-1)$-sphere
in the Barycentric refinement of $M$. 
The simplices in $S^-(x)$ on which $f$ still reaches $\{0,1,\dots, k\}$ is by 
induction a $(n-1-k)$-manifold. Since we are in a simplex, it has to be
a $(n-1-k)$-sphere. Every unit sphere $S(x)$ in the Barycentric refinement is a 
$(m-1)$-sphere as it is the join of $S^-(x)$ with $S^+(x)=\{ y, x \subset y, 
x \neq y \}$. We used the above lemma showing that join of two spheres is always a sphere.
The sphere $S^+_f(x)$ in $G_f$ is the same than $S^+(x)$ in $M$ because every
simplex $z$ in $M$ containing $x$ automatically has the property that $f(z)=Z_k$. 
So, the unit sphere $S(x)$ in $G_f$ is the join of a $(n-k-1)$-sphere and 
the $m-n$-sphere and so a $(m-k-1)$-sphere. Having shown that every unit
sphere in $G_f$ is a $(m-k-1)$-sphere, we see that $G_f$ is a $(m-k)$-manifold.
\end{proof}

\pagebreak 

\subsection{Gauss-Bonnet} 

\paragraph{}
Differential geometry is much easier in the discrete, especially in higher dimensions, 
where the Gauss-Bonnet-Chern theorem is already quite tough to prove. 
Define {\bf curvature} of a vertex $v$ in an arbitrary simplicial complex $G$ or graph as
$K(v) = \sum_{k=0}^m \frac{(-1)^{k} f_{k-1}(S(v))}{k+1}  
       = 1-\frac{f_0(S(v))}{2}-\frac{f_1(S(v)))}{3} + \dots $. 
In the case of a $2$-manifold this boils down to $1-f_0(S(v))/2+f_1(S(v))/3=1-d(v)/6$,
where $d(v)$ is the vertex degree. For odd-dimensional manifolds, the curvature $K(v)$
is constant zero. 

\paragraph{}
The general Gauss-Bonnet theorem is

\begin{thm} $\sum_{v \in V} K(v) = \chi(M)$ \end{thm}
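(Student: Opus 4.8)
The plan is to prove Gauss--Bonnet by a counting argument, exchanging the order of summation in a double sum over incident pairs $(v,x)$ where $v$ is a vertex and $x$ is a simplex containing $v$. First I would rewrite the curvature $K(v)=\sum_{k=0}^{m}\frac{(-1)^k f_{k-1}(S(v))}{k+1}$ by noting that $f_{k-1}(S(v))$ counts the $(k-1)$-simplices of the unit sphere $S(v)$, which are exactly the $k$-simplices $x$ of $G$ that contain $v$ (since adding $v$ to a $(k-1)$-simplex of $S(v)$ gives a $k$-simplex through $v$, and conversely every $k$-simplex through $v$ arises this way). Hence
\[
  K(v)=\sum_{x \ni v}\frac{(-1)^{\dim(x)}}{\dim(x)+1},
\]
where the sum runs over all simplices $x \in G$ containing $v$ (including $x=\{v\}$ itself, which contributes the $k=0$ term $1$).

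Next I would sum over all vertices and swap the two sums:
\[
  \sum_{v \in V} K(v)
   = \sum_{v \in V}\ \sum_{x \ni v}\frac{(-1)^{\dim(x)}}{\dim(x)+1}
   = \sum_{x \in G}\frac{(-1)^{\dim(x)}}{\dim(x)+1}\,\bigl|\{v : v \in x\}\bigr|.
\]
The inner count is just the number of vertices of the simplex $x$, which for a $k$-simplex is $k+1$. Therefore the factor $\frac{k+1}{\dim(x)+1}$ collapses to $1$ for every simplex, and we are left with $\sum_{x \in G}(-1)^{\dim(x)} = \sum_{k=0}^{m}(-1)^k f_k(G) = \chi(G)$, which equals $\chi(M)$ by the Schl\"afli definition recalled earlier. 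This is the whole computation; the argument is purely combinatorial and uses no manifold hypothesis at all, so in fact it proves the identity for any finite abstract simplicial complex, with the manifold case being a special instance.

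The only genuinely delicate point, and the one I would state carefully, is the bookkeeping of the indexing shift between ``$(k-1)$-simplices of $S(v)$'' and ``$k$-simplices of $G$ through $v$,'' together with the boundary term $k=0$: the constant $1$ in $K(v)=1-f_0(S(v))/2-\cdots$ corresponds to the simplex $x=\{v\}$, which must be included among the $x \ni v$ for the swap to work cleanly. I would also remark that the sign $(-1)^{\dim(x)}$ survives the swap unchanged because it depends only on $x$, not on $v$. One should check that $S(v)$ here means the unit sphere whose simplices are the proper subsets-after-removing-$v$ of simplices through $v$ (equivalently, the link of $v$), so that $f_{k-1}(S(v))$ is indeed the number of $k$-cliques containing $v$; this is automatic from the definition of $S(v)$ as the subgraph generated by the neighbors of $v$, taken together with the Whitney complex. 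Once these identifications are fixed, no further estimate is needed and the theorem follows immediately.
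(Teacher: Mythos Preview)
Your proof is correct and is essentially the same as the paper's: the paper phrases the argument as distributing the ``energy'' $\omega(x)=(-1)^{\dim(x)}$ of each simplex $x$ equally to its $\dim(x)+1$ vertices and then collecting at each vertex, which is exactly your double-sum swap $\sum_v\sum_{x\ni v}=\sum_x\sum_{v\in x}$. Your write-up is in fact more explicit about the bookkeeping (the $k=0$ term and the link/sphere identification) than the paper's sketch.
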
 

\begin{proof} 
With energies $\omega(x)=(-1)^{{\rm dim}(x)}$ attached
to each simplex $x$ in the graph,
the Euler characteristic is $\chi(M) = \sum_{x} \omega(x)$. Now distribute all these energies of a $k$-simplex $x$ 
equally to the $k+1$ vertices contained in $x$. As there are $f_{k-1}$ simplices in $S(v)$
which correspond do simplices containing $v$,
this adds $\frac{(-1)^{k} f_{k-1}(S(v))}{k+1}$ to each vertex $v$. Now just collect all up
at a vertex $v$ to get the curvature $K(v)$. The transactions of energies preserved the total
energy = Euler characteristic. 
\end{proof} 

\paragraph{}
It is the discrete analog of the Gauss-Bonnet-Chern theorem (see \cite{Cycon}).

\section{Three questions}

\subsection{Isospectral set}

\paragraph{}
For periodic Jacobi matrices, the isospectral set is a torus
where the dimension depends on the number of gaps. 
See  \cite{Cycon,SimonGeszesy}. It is natural to ask what the structure
is in the case where the operator is $L=D^2$ for a complex $G$. 

\paragraph{}
{\bf Question A}: What is the isospectral set obtained from deformations in Theorem~1?

\subsection{The universal measure}

\paragraph{}
Every measure $\mu$ can be Lebesgue
decomposed into $\mu = \mu_{pp} + \mu_{sc} + \mu_{ac}$. 

\paragraph{}
{\bf Question B}: What are the spectral types of the limiting density of states
$dk_m$? 

\paragraph{}
In dimension $m=1$, where $L_0$ and $L_1$ are isospectral, 
the limiting measure is absolutely continuous, 
the $\arcsin$-distribution on $[0,4]$ with CDM
$(2/\pi) \arcsin(\sqrt{x}/2)$.

\subsection{Sub-manifold statistics}

\paragraph{} 
For a $m$-manifold $G$ and $k<m$, the set of all functions $f: V(G) \to \{0,\dots, k\}$
is finite. Put the uniform measure on it.
Let $b_f$ denote the Betti vector $(b_0,b_1, \dots,b_q)$ of $G_f$ with 
the assumption $b_f(\emptyset) = (0,0,\dots,0)$. The expectation 
${\rm E}[b_f(G)]$ is a vector $({\rm E}[b_0],{\rm E}[b_1], \dots,{\rm E}[b_q])$.

\paragraph{}
{\bf Question C}: Are the topological features of $G$ that relate to 
${\rm E}[b_f(G)]$? 

\pagebreak

\section{Computer algebra}

\subsection{Code for Isospectral Deformation}

We compute the deformation using the QR decomposition. 

\begin{tiny} 
\lstset{language=Mathematica} \lstset{frameround=fttt}
\begin{lstlisting}[frame=single]
Generate[A_]:=If[A=={},{},Sort[Delete[Union[Sort[Flatten[Map[Subsets,A],1]]],1]]];
Whitney[s_]:=Union[Sort[Map[Sort,Generate[FindClique[s,Infinity,All]]]]]; L=Length;
sig[x_]:=Signature[x];   nu[A_]:=If[A=={},0,L[A]-MatrixRank[A]];  omega[x_]:=(-1)^(L[x]-1);
F[G_]:=Module[{l=Map[L,G]},If[G=={},{},Table[Sum[If[l[[j]]==k,1,0],{j,L[l]}],{k,Max[l]}]]];
sig[x_,y_]:=If[SubsetQ[x,y]&&(L[x]==L[y]+1),sig[Prepend[y,Complement[x,y][[1]]]]*sig[x],0];
Dirac[G_]:=Module[{f=F[G],b,d,n=L[G]},b=Prepend[Table[Sum[f[[l]],{l,k}],{k,L[f]}],0];
   d=Table[sig[G[[i]],G[[j]]],{i,n},{j,n}];{d+Transpose[d],b}];
Hodge[G_]:=Module[{Q,b,H},  {Q,b}=Dirac[G];  H=Q.Q;
   Table[Table[H[[b[[k]]+i,b[[k]]+j]],{i,b[[k+1]]-b[[k]]},{j,b[[k+1]]-b[[k]]}],{k,L[b]-1}]];
Betti[s_]:=Module[{G},If[GraphQ[s],G=Whitney[s],G=s];Map[nu,Hodge[G]]];
Fvector[A_]:=Delete[BinCounts[Map[L,A]],1];  Euler[A_]:=Sum[omega[A[[k]]],{k,L[A]}];
QR[A_]:=Module[{F,B,n,T},T=Transpose;F=T[A];n[x_]:=x/Sqrt[x.x];B={n[F[[1]]]};Do[v=F[[k]];
 u=v-Sum[(v.B[[j]])*B[[j]],{j,k-1}];B=Append[B,n[u]],{k,2,Length[F]}];{T[B],B.A}];
QRDeformation[B_,t_]:=Module[{Q,R,EE},{Q,R}=QR[MatrixExp[-t*B]]; Transpose[Q].B.Q];
s=RandomGraph[{8,20}]; G=Whitney[s];{B,b}=Simplify[Dirac[G]]; 
Manipulate[B1 = QRDeformation[B,t]; MatrixPlot[Chop[B1]], {t, 0, 1}]
\end{lstlisting}
\end{tiny}

\subsection{Code for Barycentric Refinement}

We compute the spectrum in the case $q=2$ and see the 4th refinement of $K_3$. 

\begin{tiny} 
\lstset{language=Mathematica} \lstset{frameround=fttt}
\begin{lstlisting}[frame=single]
Generate[A_]:=If[A=={},{},Sort[Delete[Union[Sort[Flatten[Map[Subsets,A],1]]],1]]];
Facets[s_]:=FindClique[s,Infinity,All];        Whitney[s_]:=Generate[Facets[s]]; L=Length; 
F[G_]:=Module[{l=Map[L,G]},If[G=={},{},Table[Sum[If[l[[j]]==k,1,0],{j,L[l]}],{k,Max[l]}]]];
Fvector[s_]:=F[Whitney[s]]; Euler[A_]:=Sum[-(-1)^L[A[[k]]],{k,L[A]}]; 
ToGraph[G_]:=UndirectedGraph[n=L[G];Graph[Range[n],
         Select[Flatten[Table[k->l,{k,n},{l,k+1,n}],1],(SubsetQ[G[[#[[2]]]],G[[#[[1]]]]])&]]];
Barycentric[s_]:=ToGraph[Whitney[s]];  Barycentric[s_,n_]:=Last[NestList[Barycentric,s,n]]; 
s=Barycentric[CompleteGraph[3],4];K=KirchhoffMatrix[s];n=L[K];l=Quiet[Sort[Eigenvalues[1.K]]]; 
UU[z_]:=Sum[If[Abs[l[[k]]]< 10^(-10),0,Log[Abs[l[[k]]-z]]],{k,L[l]}];
S1=ContourPlot[UU[x+I y],{x,-2,10},{y,-6,6},ContourStyle->Thickness[0.0001],PlotPoints->200];
S2=Plot[UU[x]/n, {x,-2, 10},PlotStyle->{Thickness[0.002],Purple}];
S3=Graphics[{Black,PointSize->0.003,Table[Point[{l[[k]], 0}],{k,L[l]}]}];
S4=Graphics[{Blue,Thickness[0.002],Line[Prepend[Table[{l[[k]],4*(k-1)/n},{k,L[l]}],{-2,0}]]}];
S=Show[{S1,S2,S3,S4},PlotRange->{{-2,10}, {-4,5}},AspectRatio->9/16]
\end{lstlisting}
\end{tiny}

\subsection{Code for Level sets in Manifolds}

Take a 4-sphere and compute manifolds in it and check the Gauss-Bonnet formula:

\begin{tiny}
\lstset{language=Mathematica} \lstset{frameround=fttt}
\begin{lstlisting}[frame=single]
Generate[A_]:=If[A=={},{},Sort[Delete[Union[Sort[Flatten[Map[Subsets,A],1]]],1]]];
Whitney[s_]:=Generate[FindClique[s,Infinity,All]]; w[x_]:=-(-1)^k;
R[G_,k_]:=Module[{},R[x_]:=x->RandomChoice[Range[k]]; Map[R,Union[Flatten[G]]]];
F[G_]:=Delete[BinCounts[Map[Length,G]],1]; Euler[G_]:=F[G].Table[w[k],{k,Length[F[G]]}];
Surface[G_,g_]:=Select[G,SubsetQ[#/.g,Union[Flatten[G] /. g]] &]; Clear[S]; 
S[s_,v_]:=VertexDelete[NeighborhoodGraph[s,v],v];     Sf[s_,v_]:=F[Whitney[S[s,v]]];
Curvature[s_,v_]:=Module[{f=Sf[s,v]},1+f.Table[(-1)^k/(k+1),{k,Length[f]}]];
Curvatures[s_]:=Module[{V=VertexList[s]},Table[Curvature[s,V[[k]]],{k,Length[V]}]];
J[G_,H_]:=Union[G,H+Max[G]+1,Map[Flatten,Map[Union,Flatten[Tuples[{G,H+Max[G]+1}],0]]]];
ToGraph[G_]:=UndirectedGraph[n=Length[G];Graph[Range[n],
  Select[Flatten[Table[k->l,{k,n},{l,k+1,n}],1],(SubsetQ[G[[#[[2]]]],G[[#[[1]]]]])&]]];
Barycentric[s_]:=ToGraph[Whitney[s]];
G=J[Whitney[Barycentric[CompleteGraph[{2,2,2}]]],Whitney[CycleGraph[7]]];  (* J=Join *)
g=R[G,3]; H=Surface[G,g];   (* A codimension 2 manifold in the 4-sphere G=Oct * C_7  *)
Print["EulerChi= ",Euler[H]]; Print["Fvector: ",F[H]];s=ToGraph[H];GraphPlot3D[s]
Print["Gauss-Bonnet Check: "]; Print[Total[Curvatures[s]]==Euler[H]];
\end{lstlisting}
\end{tiny}

\bibliographystyle{plain}

\end{document}